\let\MYcaption\@makecaption
\newcommand{\eqdef}{\stackrel{\mbox{\scriptsize def}}{=}}
\newcommand{\reals}{\mathbb{R}}
\newcommand{\realsnng}{\mathbb{R}^{\ge0}}
\newcommand{\cala}{\mathcal{A}}
\newcommand{\cald}{\mathcal{D}}
\newcommand{\calf}{\mathcal{F}}
\newcommand{\calo}{\mathcal{O}}
\newcommand{\calp}{\mathcal{P}}
\newcommand{\calr}{\mathcal{R}}
\newcommand{\cals}{\mathcal{S}}
\newcommand{\calw}{\mathcal{W}}
\newcommand{\supp}{{\mathtt{supp}}}
\newcommand{\Dists}{\mathbb{D}} 
\newcommand{\randassign}{\ensuremath{\stackrel{\mathrm{\$}}{\leftarrow}}}
\newcommand{\expect}{\operatornamewithlimits{\mathbb{E}}}
\newcommand{\diverge}[2]{\mathit{D}(#1 \parallel #2)}
\newcommand{\renyidiverge}[3]{\mathit{D}_{#3}(#1 \parallel #2)}
\newcommand{\Dtv}[0]{\mathit{D}_{\sf tv}}
\newcommand{\tvdiverge}[2]{\renyidiverge{#1}{#2}{\sf tv}}
\newcommand{\cp}[2]{\mathsf{cp}(#1, #2)}
\newcommand{\utmetric}[0]{\mathit{d}}
\newcommand{\Winfu}{\mathit{W}_{\!\utmetric}}
\newcommand{\StatEL}{\mbox{\rm StatEL}}
\renewcommand{\phi} {\varphi}
\newcommand{\erightarrow}{\supset}
\newcommand{\hy}{\hat{y}}
\newcommand{\hv}{\hat{v}}
\newcommand{\hell}{\hat{\ell}}
\newcommand{\ellt}{\hell_{\mathsf{tar}}}
\newcommand{\tiv}{\widetilde{v}}
\newcommand{\Var}{\mathtt{Mes}}
\newcommand{\Label}{\mathtt{L}}
\newcommand{\sunny}{\mathit{sunny}}
\newcommand{\snowy}{\mathit{snowy}}
\newcommand{\man}{\eta_{\mathrm{m}}}
\newcommand{\woman}{\eta_{\mathrm{w}}}
\newcommand{\PR}[1]{\mathop{\mathbb{P}_{#1}}}
\newcommand{\MKa}{\mathop{\mathsf{K}_{a}}}
\newcommand{\MKeD}{\mathop{\mathsf{K}^{\varepsilon\!,\Winfu}}}
\newcommand{\MPa}{\mathop{\mathsf{P}_{\!a}}}
\newcommand{\MTDt}{\Delta_{T}}
\newcommand{\MTDx}[1]{\Delta_{#1}}
\newcommand{\trans}{\mathop{T}}
\newcommand{\IND}[2]{\mathbin{\sim}_{#1}^{#2}}
\newcommand{\M}{\mathfrak{M}}
\newcommand{\Rt}[0]{\calr_{T}}
\newcommand{\Rpsi}[0]{\calr_{\psi}}
\newcommand{\RpsiA}[0]{\calr_{\psi_0}}
\newcommand{\RpsiB}[0]{\calr_{\psi_1}}
\newcommand{\Rind}[0]{\calr_{x}^{\varepsilon,D}}
\newcommand{\Rrob}[0]{\calr_{x}^{\varepsilon,\!\Winfu}}
\newcommand{\wtrue}[0]{\mathit{w_{\,\sf true}}}
\newcommand{\wfin}[0]{\mathit{w_{\,\sf fin}}}
\newcommand{\wtrain}[0]{\mathit{w_{\,\sf train}}}
\newcommand{\wtest}[0]{\mathit{w_{\sf test}}}
\newcommand{\TP}{\mathit{tp}}
\newcommand{\TN}{\mathit{tn}}
\newcommand{\FP}{\mathit{fp}}
\newcommand{\FN}{\mathit{fn}}
\newcommand{\Loss}{\lambda_{L}}
\newcommand{\loss}{\mathit{loss}}
\newcommand{\Ltrain}{\mathit{L}_{\mathsf{train}}}
\newcommand{\GE}{\mathsf{GE}}
\newcommand{\Prevalence}{\mathsf{Prevalence}}
\newcommand{\Accuracy}{\mathsf{Accuracy}}
\newcommand{\Precision}{\mathsf{Precision}}
\newcommand{\FDR}{\mathsf{FDR}}
\newcommand{\FOR}{\mathsf{FOR}}
\newcommand{\NPV}{\mathsf{NPV}}
\newcommand{\Recall}{\mathsf{Recall}}
\newcommand{\FallOut}{\mathsf{FallOut}}
\newcommand{\MissRate}{\mathsf{MissRate}}
\newcommand{\Specificity}{\mathsf{Specificity}}
\newcommand{\TotalRobust}{\mathsf{Robust}}
\newcommand{\TargetRobust}{\mathsf{TRobust}}
\newcommand{\GrpFair}{\mathsf{GrpFair}}
\newcommand{\EqOdds}{\mathsf{EqOdds}}
\newcommand{\EqOpp}{\mathsf{EqOpp}}
\newcommand{\Suffice}{\mathsf{Sufficency}}
\newif\ifcommentson\commentsonfalse
\newif\ifconferenceon\conferenceontrue
\newcommand{\arxiv}[1]{}
\newcommand{\conference}[1]{#1}
\newcommand{\conferenceShort}[1]{}
\newcommand{\arxiv}[1]{#1}
\newcommand{\conference}[1]{}
\newcommand{\conferenceShort}[1]{}
\newcommand{\modified}[1]{\textcolor{black}{#1}}
\newcommand{\commentsize}[0]{.90\textwidth}
\newcommand{\commentYK}[1]{\begin{center} \parbox{\commentsize}{\textbf{\textcolor{black}{Comment Y.}} \textcolor{red}{#1} }\end{center}}
\newcommand{\replyYK}[1]{\begin{center} \parbox{\commentsize}{\textbf{Reply Y.} \textcolor{blue}{#1} }\end{center}}
\newcommand{\commentY}[1]{\marginpar{\footnotesize \color{red} {\bf Y:} \textsf{\scriptsize #1}}}
\newcommand{\replyY}[1]{\marginpar{\footnotesize \color{red} {\bf Y:} \textsf{\scriptsize #1}}}
\newcommand{\commentYK}[1]{}
\newcommand{\replyYK}[1]{}
\newcommand{\commentY}[1]{}
\newcommand{\replyY}[1]{}
\newcommand{\pagelimitmarker}[1]{~\\ {\colorR{\ifthenelse{\thepage>#1}{\Huge Exceeding the page limit}{\huge Within the page limit}}}~\\ {\huge{\colorR{~~Page Limit\,\,\,\,\, = #1}}}~\\ {\huge{\colorR{~~Current Page = $\thepage$}}}}
\begin{document}

\title{
An Epistemic Approach to the Formal Specification of Statistical Machine Learning
\thanks{This work was supported 
by the New Energy and Industrial Technology Development Organization (NEDO), 
by ERATO HASUO Metamathematics for Systems Design Project (No. JPMJER1603), JST,
and by Inria under the project LOGIS.}
}

\author{Yusuke Kawamoto}

\institute{Yusuke Kawamoto \at
AIST, Tsukuba, JAPAN \\
~ \\
ORCID: 0000-0002-2151-9560
}

\date{Received: date / Accepted: date}

\maketitle

\begin{abstract}
We propose an epistemic approach to formalizing statistical properties of machine learning.
Specifically, we introduce a formal model for supervised learning based on a Kripke model where each possible world corresponds to a possible dataset and modal operators are interpreted as transformation and testing on datasets.
Then we formalize various notions of the classification performance, robustness, and fairness of statistical classifiers by using our extension of statistical epistemic logic (StatEL).
In this formalization, we show relationships among properties of classifiers, and relevance between classification performance and robustness.
As far as we know, this is the first work that uses epistemic models and logical formulas to express statistical properties of machine learning, and would be a starting point to develop theories of formal specification of machine learning.

\keywords{
Modal logic
\and Possible world semantics 
\and Machine learning 
\and Classification performance
\and Robustness
\and Fairness}
\end{abstract}

\section{Introduction}
\label{sec:intro}
With the increasing use of machine learning in real-life applications, 
the safety and security of learning-based systems have been of great interest.
In particular, many recent studies~\cite{Szegedy:14:ICLR},\cite{Chakraborty:18:arxiv} have found vulnerabilities on the robustness of deep neural networks (DNNs) to malicious inputs, which can lead to disasters in security critical systems, such as self-driving cars.
To find out these vulnerabilities in advance, there have been researches on the formal verification and testing methods for the robustness of DNNs in recent years~\cite{Huang:17:CAV,Katz:17:CAV,Pei:17:SOSP,Tian:18:ICSE}.
However, relatively little attention has been paid to the formal specification of machine learning~\cite{Seshia:18:ATVA}.

In the research filed of formal specification and verification,
logical approaches have been shown useful to characterize desired properties and to develop theories to discuss those properties.
For example, temporal logic~\cite{Prior:1957} is a branch of modal logic for expressing time-dependent propositions, and has been widely used to describe requirements of hardware and software systems.
For another example, epistemic logic~\cite{vonWright:51:book} is a modal logic for knowledge and belief that has been employed as formal policy languages for distributed systems (e.g., for the authentication~\cite{Burrows:90:TOCS} and the anonymity \cite{Syverson:99:FM} of security protocols).
As far as we know, however, no prior work has employed logical formulas to rigorously describe various statistical properties of machine learning, although there are some papers that (often informally) list various desirable properties of machine learning~\cite{Seshia:18:ATVA}.

In this paper, we present a first logical formalization of statistical properties of machine learning.
To describe the statistical properties in a simple and abstract way, we extend \emph{statistical epistemic logic} (\StatEL{})~\cite{Kawamoto:19:FC}, which 
\modified{has recently been}
proposed to describe statistical knowledge and is applied to formalize statistical hypothesis testing and statistical privacy of databases.

A key idea in our modeling of statistical machine learning is that we formalize logical aspects in the syntax level, and statistical distances and dataset operations in the semantics level by using accessibility relations of a Kripke model~\cite{Kripke:63:MLQ}.
In this model, we formalize supervised learning and some of its desirable properties, including performance, robustness, and fairness.
More specifically, classification performance and robustness are described as the differences between the correct class label and the classifier's prediction, whereas fairness is expressed as a conditional indistinguishability between different groups.

\paragraph{Our contributions.}
The main contributions of this work are as follows:
\begin{itemize}
\item We propose a logical approach to formalizing statistical properties of machine learning in a simple and abstract way.
Specifically, we introduce a principle that logical aspects of statistical properties are described in the syntax level, and statistical distances and datasets are formalized in the semantics level.

\item We formalize supervised learning models and test datasets
\modified{(used to check whether the learning models satisfy specification)}
by employing a distributional Kripke model~\cite{Kawamoto:19:FC} where each possible world corresponds to a possible test dataset, 
and modal operators are interpreted as transformation and testing on datasets.
Then we show how 
the sampling from a dataset
and non-deterministic adversarial inputs are formalized in the distributional Kripke model.

\item We propose an extension of statistical epistemic logic (\StatEL{}) as a formal language to describe various properties of machine learning models,
including the performance, robustness, and fairness of statistical classifiers.
Then the satisfaction of logical formulas representing those properties is associated with their testing using a test dataset.
As far as we know, this is the first work that uses logical formulas to formalize various statistical properties of machine learning, and that provides an epistemic view on those properties.

\item We show some relationships among properties of classifiers, such as different levels of robustness.
We also present certain relationships between classification performance and robustness, which suggest robustness-related properties that have not been formalized in the literature as far as we know.
\end{itemize}

\paragraph{Cautions and limitations.}
In this paper, we focus on formalizing properties of supervised learning models that may be tested by using a dataset; i.e.,
we do \emph{not} deal with unsupervised learning, reinforcement learning, the properties of learning algorithms,
quality of training data (e.g., sample bias), quality of testing (e.g., coverage criteria), explainability, temporal properties, or system-level specification.
It should be noted that most of the properties formalized in this paper have been known in \modified{machine learning literatures}, and the novelty of this work lies in the logical formulation of those statistical properties.

We also \modified{highlight} that this work aims to provide a logical approach to the modeling of statistical properties tested with a dataset,
and does not present methods for checking, guaranteeing, or improving the performance\slash robustness\slash fairness of machine learning models.
As for the satisfiability of logical formulas,
we leave the development of testing and (statistical) model checking algorithms as future work, since the research area on the testing and verification of machine learning is relatively new and needs further techniques to improve the scalability.
Moreover, in some applications such as image recognition, some atomic formulas (e.g., representing whether an input image is \modified{a} panda) cannot be defined mathematically, and require additional techniques based on experiments.
Nevertheless, we demonstrate that describing various properties using logical formulas is useful to explore desirable properties and to discuss their relationships in a framework.

Finally, we emphasize that our work is the first attempt to use epistemic models and logical formulas to express statistical properties of machine learning models, and would be a starting point to develop theories of formal specification of machine learning in future research.

\paragraph{Relationship with the preliminary version.}
The main novelties of this paper with respect to the preliminary version~\cite{Kawamoto:19:SEFM} are as follows:
\begin{itemize}
\item We add how the satisfaction of a formula at a possible world can be regarded as the testing of a specification using a test dataset (Sect.~\ref{sub:testing:dataset}).
\item We show how modal operators are used to model the transformation and testing on datasets.
For example, \emph{data preparation} $\trans$ (e.g., data cleaning, data augmentation) can also be formalized as a modal operator $\MTDt$  (Sect.~\ref{sub:modality:transform}).
\item We re-interpret the non-classical implication $\erightarrow$ for conditional probabilities in \StatEL{} as a modal operator associated with a conditioning relation (Sect.~\ref{sub:modal:conditioning}).
\item We introduce a modal operator $\IND{x}{\varepsilon,D}$ for conditional indistinguishability (Sect. \ref{sub:modal:indistinguishable}).
Then we provide a more comprehensible formalization of the fairness of supervised learning (Sect.~\ref{sec:ML:fairness}) without using counterfactual epistemic operators~\cite{Kawamoto:19:SEFM},
\modified{because the formalization using these operators requires an additional formula and makes the presentation more complicated and unintuitive}.
\item We add a formalization of \emph{generalization error} to capture how accurately a classifier is able to classify previously unseen input data (Sect.~\ref{sub:ML:correctness:generalize}).
\item We add a formalization of other fairness notions called \emph{separation} (Sect.~\ref{sub:fair:seperation}) and \emph{sufficiency} (Sect.~\ref{sub:fair:sufficiency}) so that this paper covers all three categories of fairness notions~\cite{Barocas:19:book}.
\item \modified{We show a running example of a pedestrian detection to illustrate the formalization of various notions of performance, robustness, and fairness.}
\end{itemize}

\paragraph*{Paper organization.}~
The rest of this paper is organized as follows.
Sect.~\ref{sec:preliminaries} presents notations used in this paper and provides background on statistical distances and statistical epistemic logic (\StatEL{}).
Sect.~\ref{sec:extend:StatEL} introduces a different view on the modal operators in \StatEL{} and extends the logic with additional operators.
Sect.~\ref{sec:formal:ML} introduces a formal model for describing the behaviors of statistical classifiers and non-deterministic adversarial inputs.
Sects.~\ref{sec:ML:performance-prediction}, \ref{sec:ML:robustness}, and~\ref{sec:ML:fairness} respectively formalize various notions of the performance, robustness, and fairness of classifiers by using our extension of \StatEL{}.
Sect.~\ref{sec:related} presents related work and Sect.~\ref{sec:conclude} concludes.

\section{Preliminaries}
\label{sec:preliminaries}
In this section we introduce some notations, and review background on statistical distance notions and the syntax and semantics of \emph{statistical epistemic logic} (\StatEL{}), introduced in~\cite{Kawamoto:19:FC}.

\subsection{Notations}
\label{sub:notations}

Let $\realsnng$ be the set of non-negative real numbers,
and $[0, 1]$ be the set of non-negative real numbers not greater than~$1$.
We denote by $\Dists\calo$ the set of all probability distributions over a finite set~$\calo$. Given a finite set $\calo$ and a probability distribution $\mu\in\Dists\calo$, the probability of sampling a value $v$ from $\mu$ is denoted by $\mu[v]$.
For a subset $R\subseteq\calo$, let $\mu[R] = \sum_{v\in R} \mu[v]$.
For a distribution $\mu$ over a finite set $\calo$, its \emph{support} is defined by 
$\supp(\mu) = \{ v \in \calo \colon \mu[v] > 0 \}$.

\subsection{Statistical Distance}
\label{sub:stat-distance}

We recall popular notions of distance between probability distributions: \emph{total variation} and \emph{$\infty$-Wasserstein distance}.

\modified{Informally, total variation between two distributions $\mu_0$ and $\mu_1$ over a set $\calo$ represents the largest difference between the probabilities that $\mu_0$ and $\mu_1$ assign to an identical subset $R$ of $\calo$.}

\begin{definition}[Total variation]\label{def:TV}\rm
For a finite set $\calo$, the \emph{total variation} $\Dtv$ of two distributions $\mu_0, \mu_1 \in \Dists\calo$ is defined by:
\begin{align*}
\tvdiverge{\mu_0}{\mu_1} \eqdef
\sup_{R \subseteq \calo} | \mu_0[R] - \mu_1[R] |
{.}
\end{align*}
\end{definition}

\modified{We then recall the $\infty$-Wasserstein metric~\cite{Vaserstein:69:PPI}.
Intuitively, the $\infty$-Wasserstein metric $\Winfu(\mu_0, \mu_1)$ between two distributions $\mu_0, \mu_1$ represents the minimum largest move between points in a transportation from $\mu_0$ to $\mu_1$.}

\begin{definition}[$\infty$-Wasserstein metric]
\label{def:p-Wasserstein-metric}\rm
Let $\calo$ be a finite set and $\utmetric: \calo\times\calo \rightarrow \realsnng$ be a metric over $\calo$.
The \emph{$\infty$-Wasserstein metric} $\Winfu$ w.r.t. $\utmetric$ between two distributions $\mu_0, \mu_1\in\Dists\calo$ is defined~by:
\[
\Winfu(\mu_0, \mu_1) = 
\min_{\mu\in \cp{\mu_0}{\mu_1}}\hspace{-3ex}
\max_{\hspace{3ex}(v_0, v_1)\in\supp(\mu)}\hspace{-2ex}
\utmetric(v_0, v_1)
\]
where $\cp{\mu_0}{\mu_1}$ is the set of all couplings%
\footnote{A \emph{coupling} of two distributions $\mu_0, \mu_1\in\Dists\calo$ is a joint distribution $\mu\in\Dists(\calo\times \calo)$ such that $\mu_0$ and $\mu_1$ are $\mu$'s marginal distributions, i.e.,
for each $v_0\in \calo$,
$\mu_0[v_0] = \sum_{v'_1\in \calo} \mu[v_0, v'_1]$ and
for each $v_1\in \calo$,
$\mu_1[v_1] = \sum_{v'_0\in \calo} \mu[v'_0, v_1]$.
\modified{For a coupling $\mu$, the support $\supp(\mu)$ is the maximum subset of $\calo\times\calo$ whose elements are assigned non-zero probabilities in~$\mu$.}
}
of $\mu_0$ and~$\mu_1$.
\end{definition}

\subsection{Syntax of \StatEL{}}
\label{sub:syntax}

We next recall the syntax of statistical epistemic logic (\StatEL{}) \cite{Kawamoto:19:FC}, 
which has two levels of formulas: \emph{static} and \emph{epistemic formulas}.
Intuitively, a static formula describes a proposition satisfied at a (deterministic) state, while an epistemic formula describes a proposition satisfied at a probability distribution of states.
In this paper, the former is used only to define the latter.

Formally, let $\Var$ be a set of symbols called \emph{measurement variables}, and
$\Gamma$ be a set of atomic formulas of the form $\gamma(x_1, x_2, \ldots, x_n)$ for a predicate symbol $\gamma$, $n \ge 0$, and $x_1, x_2, \ldots, x_n\in\Var$.
Let $I \subseteq [0, 1]$ be a finite union of disjoint intervals, and $\cala$ be a finite set of indices (e.g., associated with statistical divergences).
Then the formulas are defined by:
\begin{align*}
&\mbox{Static formulas:}~
\psi \mathbin{::=}
 \gamma(x_1, x_2, \ldots, x_n) \,|\,
 \neg \psi \,|\, \psi \wedge \psi
\\
&\mbox{Epistemic formulas:}~
\phi \mathbin{::=}
\PR{I} \psi \,|\, \neg \phi \,|\, \phi \wedge \phi \,|\,
\psi {\erightarrow} \phi \,|\, {\MKa}\phi
\end{align*}
where 
$a\in\cala$.
We denote by $\calf$ the set of all epistemic formulas.
Note that we have no quantifiers over measurement variables. 
(See Sect.~\ref{sub:interpretation} for more details.)

The \emph{probability quantification} $\PR{I} \psi$ represents that a static formula $\psi$ is satisfied with a probability belonging to a set $I$.
For instance, $\PR{(0.95, 1]} \psi$ represents that $\psi$ holds with a probability greater than $0.95$.
By $\psi \erightarrow \PR{I} \psi'$ we represent that the conditional probability of $\psi'$ given $\psi$ is included in a set $I$.
The \emph{epistemic knowledge} $\MKa \phi$ expresses that 
we \modified{know} $\phi$
\modified{when our capability of observation is denoted by $a\in\cala$}.

As syntax sugar, we use \emph{disjunction} $\vee$, \emph{classical implication} $\rightarrow$, and \emph{epistemic possibility} $\MPa$, defined as usual by:
$\phi_0 \vee \phi_1 \mathbin{::=} \neg (\neg \phi_0 \wedge \neg \phi_1)$,
$\phi_0 \rightarrow \phi_1 \mathbin{::=} \neg \phi_0 \vee \phi_1$,
and $\MPa{\phi} \mathbin{::=} \neg \MKa \neg \phi$.
When $I$ is a singleton $\{ i \}$, we abbreviate $\PR{I}$ as $\PR{i}$.

\subsection{Distributional Kripke Model}
\label{sub:Kripke}

Next we recall the notion of a distributional Kripke model \cite{Kawamoto:19:FC}, where each possible world is associated with a probability distribution over a set of states, and with a stochastic assignment of data to measurement variables.

\begin{definition}[Distributional Kripke model] \label{def:dist-Kripke-model} \rm
Let $\cala$ be a finite set of indices (typically associated with operations and tests on datasets),
$\cals$ be a finite set of states, and $\calo$ be a finite set of data, called a \emph{data domain}.
A \emph{distributional Kripke model} 
is a tuple 
$\M =(\calw, (\calr_a)_{a\in\cala}, (V_s)_{s\in\cals})$ 
consisting of:
\begin{itemize}
\item a non-empty set $\calw$ of multisets of states belonging to $\cals$;
\item for each $a\in\cala$, an accessibility relation $\calr_a \subseteq \calw {\times}\!\calw$;
\item for each $s\in\cals$, a valuation $V_s: \Gamma\rightarrow\calp(\calo^k)$ that maps each $k$-ary predicate $\gamma$ to a set $V_s(\gamma)$ of $k$-tuples of data.
\end{itemize}
The set $\calw$ is called a \emph{universe}, and its elements are called \emph{possible worlds}.
A world is said to be \emph{finite} if it is a finite multiset, i.e., it has a finite number of (possibly duplicated) elements.
\modified{A world is said to be \emph{infinite} if it is an infinite multiset.}
\end{definition}

\modified{
The relation $\calr_a$ determines an accessibility between two worlds.
For example, $(w, w')\in\calr_a$ means that a world $w'$ is accessible from a world~$w$ when our capability of distinguishing possible worlds is denoted by $a\in\cala$.
The valuation $V_s$ may give a possibly different interpretation of a predicate $\gamma$ at a different state $s$.
}%
We assume that all measurement variables range over the same data domain $\calo$ in every world.
\modified{The interpretation of measurement variables at a state $s$ is given by a deterministic assignment $\sigma_s$ defined below.}

\begin{definition}[Deterministic assignment] \label{def:deterministic-assignment} \rm
For any distributional Kripke model $\M {\,=}(\calw\!, (\calr_a)_{a\in\cala}, (V_s)_{s\in\cals})$, 
we assume that each world $w\in\calw$ is associated with a function $\rho_w: \Var\times\cals\rightarrow\calo$ that maps each measurement variable $x$ to its value $\rho_w(x, s)$ that is observed at a state~$s$ belonging to the world $w$.
We also assume that each state $s$ in a world $w$ is associated with the \emph{deterministic assignment} $\sigma_s: \Var\rightarrow\calo$ defined by $\sigma_s(x) = \rho_w(x, s)$.
\end{definition}

Since each world $w$ is a multiset of states, we abuse the notation and denote by $w[s]$ the probability that a state $s$ is randomly chosen from $w$ (i.e., the number of occurrences of $s$ in the multiset $w$, divided by the total number of elements in $w$).
Here we regard each world $w$ as a probability distribution over the states that corresponds to the multiset.

The probability that a measurement variable $x\in\Var$ has a value $v\in\calo$ is:
$\sigma_w(x)[v] = \sum_{\substack{s\in w, \sigma_s(x) = v}} w[s]$.
Note that $\sigma_w: \Var\rightarrow\Dists\calo$ maps each measurement variable $x$ to a probability distribution $\sigma_w(x)$ over the data domain~$\calo$. 
Hence $\sigma_w$ represents the joint probability distribution of all variables in $\Var$, 
and is called the \emph{stochastic assignment} at $w$.
When a state $s$ is uniformly drawn from a multiset $w$ of states, a datum $\sigma_s(x)$ is sampled from the distribution $\sigma_w(x)$.

In later sections, a possible world corresponds to a \emph{dataset} (i.e., a multiset of data tuples) from which data are sampled.
For example, suppose that we have only three measurement variables $\Var = \{ x, y, z \}$.
Then for each state $s$ in a world $w$, the deterministic assignment $\sigma_{s}: \Var\rightarrow\calo$ represents the tuple of data $(\sigma_{s}(x), \sigma_{s}(y), \sigma_{s}(z))$.
Hence each state $s$ corresponds to a tuple of data, and the world $w$ corresponds to the dataset $\{ (\sigma_{s}(x), \sigma_{s}(y), \sigma_{s}(z)) \mid s\in w \}$.

\subsection{Stochastic Semantics of \StatEL{}}
\label{sub:interpretation}

Now we recall the \emph{stochastic semantics}\,\cite{Kawamoto:19:FC} for the \StatEL{} formulas over a distributional Kripke model 
$\M =(\calw, \allowbreak (\calr_a)_{a\in\cala}, (V_s)_{s\in\cals})$ 
with $\calw = \Dists\cals$.

The interpretation of a static formulas $\psi$ at a state $s$ is given by:
\begin{align*}
s \models \gamma(x_1, \ldots, x_k) 
& ~\mbox{ iff }~
(\sigma_s(x_1), \ldots, \sigma_s(x_k)) \in V_s(\gamma)
\\
s \models \neg \psi 
& ~\mbox{ iff }~
s \not\models \psi 
\\
s \models \psi \wedge \psi'
& ~\mbox{ iff }~
s \models \psi
~\mbox{ and }~
s \models \psi'
{.}
\end{align*}

The \emph{restriction} $w|_\psi$ of a world $w$ to a static formula $\psi$ is defined by
$w|_\psi[s] = \frac{w[s]}{\sum_{s': s' \models \psi} w[s']}$ if $s \models \psi$,
and $w|_\psi[s] = 0$ otherwise.
Note that $w|_\psi$ is undefined if there is no state $s$ that satisfies $\psi$ and has a non-zero probability in $w$.

Then the interpretation of epistemic formulas in a world $w$ is defined by:
\begin{align*}
\M, w \models \PR{I} \psi
& ~\mbox{ iff }~
\Pr\!\left[ s \randassign w :~ s \models \psi \right] \in I
\\
\M, w \models \neg \phi
& ~\mbox{ iff }~
\M, w \not\models \phi
\\
\M, w \models \phi \wedge \phi'
& ~\mbox{ iff }~
\M, w \models \phi
~\mbox{ and }~
\M, w \models \phi'
\\
\M, w \models \psi \erightarrow \phi
& ~\mbox{ iff }~
\mbox{$w|_{\psi}$ is defined and }~
\M, w|_{\psi} \models \phi
\\
\M, w \models \MKa \phi
& ~\mbox{ iff }~
\mbox{for every $w'$ s.t. $(w, w') \in \calr_a$, }~
\\& ~\hspace{3.5ex}~
\M, w' \models \phi
{,}
\end{align*}
where $s \randassign w$ represents that a state $s$ is sampled from the distribution $w$.

Then $\M, w \models \psi_0 \erightarrow \PR{I} \psi_1$ represents that 
the conditional probability of satisfying a static formula $\psi_1$ given another $\psi_0$ is included in a set $I$ at a world~$w$.

In each world $w$, measurement variables can be interpreted using $\sigma_w$. 
This allows us to assign different values to different occurrences of a variable in a formula;
E.g., in $\phi(x) \rightarrow \MKa \phi'(x)$,\, $x$ occurring in $\phi(x)$ is interpreted by $\sigma_{w}$ in a world $w$, while $x$ in $\phi'(x)$ is interpreted by $\sigma_{w'}$ in another $w'$ s.t. $(w, w')\in \calr_a$.

Finally, the interpretation of an epistemic formula $\phi$ in $\M$ is given by:
\begin{align*}
\M \models \phi
& ~\mbox{ iff }~
\mbox{for every world $w$ in $\M$, }~
\M, w \models \phi
{.}
\end{align*}

Hereafter we mainly focus on the satisfaction local to a possible world, and $\M$ may be omitted when it is clear from the context.

\section{Modality as Transformation and Testing on Datasets}
\label{sec:extend:StatEL}

In this section we introduce a different view on the modal operators in statistical epistemic logic (\StatEL{}), and define additional modal operators that are used to formalize various properties of machine learning in Sects.~\ref{sec:ML:performance-prediction} to~\ref{sec:ML:fairness}.

\subsection{Checking Satisfaction at a World as Testing with a Dataset}
\label{sub:testing:dataset}

We first show how \modified{we regard the satisfaction of a formula $\phi$ as testing a system's specification expressed by $\phi$} as follows.

As explained in Sect.~\ref{sub:Kripke}, a possible world corresponds to a possible dataset.
Thus, given a model $\M$, a world $w$, and a formula $\phi$, checking the satisfaction $\M, w \models \phi$ can be regarded as testing whether the specification $\phi$ \modified{of a system 
(e.g., a machine learning model we formalize in Sect.~\ref{sec:formal:ML})} is satisfied 
\modified{when the dataset~$w$ provides inputs to the system}.
\modified{For example, let $\phi$ be a formula representing that a machine learning task (e.g., classification) $C$ fails with probability at most $5\%$.
Then  $\M, w \models \phi$ represents that when the learning task $C$ is performed using a test dataset $w$, then it fails for at most $5\%$ of the test data in $w$.}

For simplicity, we discuss the satisfaction of the formulas $\phi$ in which neither $\MKa$ nor $\MPa$ occurs as follows.
For each state (namely, data tuple) $s\in w$ and for each static sub-formula $\psi$ of $\phi$, we can efficiently check whether $s \models \psi$.

When the dataset $w$ is finite (i.e., it is a finite multiset of data tuples), we can check the satisfaction $w \models \phi$ in finite time,
more precisely, in linear time in the number of elements in~$w$.

When the dataset $w$ is infinite, however, we cannot check whether $w \models \phi$ in general.
For example, suppose that $w$ \modified{is} the infinite dataset representing a true distribution from which data are sampled and observed.
When we cannot learn $w$ itself, we usually obtain a finite dataset $\wfin$ by sampling data from $w$ repeatedly and independently and check a specification $\phi$ only with this test dataset $\wfin$.

Hereafter, we mainly deal with distributional Kripke models $\M$ that have infinite numbers of finite worlds.
In the following sections except Sect.~\ref{sec:ML:robustness}, we deal only with formulas without $\MKa$ nor $\MPa$,\,%
\footnote{
The testing of a formula $\phi$ is not feasible when an epistemic operator $\MKa$ or $\MPa$ occurs in $\phi$ and the model $\M$ has a large number of possible worlds.
Detailed analysis of time complexity of \StatEL{} is out of the scope of this paper, and should be included in the journal version of our paper~\cite{Kawamoto:19:FC} that proposed \StatEL{}.
As we will discuss in Sect.~\ref{sec:ML:robustness}, the robustness of machine learning is formalized using these epistemic operators, hence cannot be tested in practical time unless $\M$ is comprised of a small number of worlds.} 
hence can check their satisfaction at a finite world in finite time.

\subsection{Modal Operators for Dataset Transformation}
\label{sub:modality:transform}

In the rest of Sect.~\ref{sec:extend:StatEL}, we show that modal operators can be used to model the transformation and testing on datasets.

First, we introduce \emph{modal operators for dataset transformation}.
The modal operator $\MTDt$ defined below is unary (i.e., taking a single formula as argument), and is parameterized with a transformation $\trans$ between datasets.
Intuitively, $w \models \MTDt \phi$
represents that a formula $\phi$ is satisfied for the dataset $w'$ that is obtained by transforming the current dataset $w$ by $\trans$.
Formally, the modal operator $\MTDt$ is interpreted as follows.

\begin{definition}[Modality $\MTDt$ for a dataset transformation $T$]\label{def:modal:transform} \rm
Given a function $T: \calw\rightarrow\calw$, we define an accessibility relation as 
$\Rt \eqdef \{ (w, w') \mid w' = T(w) \}$.
Then we define the interpretation of $\MTDt$ by:
\begin{align*}
&~~\M, w \models \MTDt \phi
\\&
~\mbox{ iff }~
\mbox{there is a $w'$ s.t. $(w, w') \in \Rt$ and }~
\M, w' \models \phi
{.}
\end{align*}
\end{definition}

For example, machine learning often require \emph{data preparation} to manipulate a given raw dataset into a form that makes a machine learning task feasible and more effective 
(e.g., \emph{data cleaning}, \emph{data augmentation}).
\modified{For a dataset $w$ and two ways of data preparation $\trans_0$ and $\trans_1$,\, $w \models \MTDx{\trans_0} \phi \wedge \MTDx{\trans_1} \phi$
represents that a property $\phi$ holds 
for the two prepared datasets $\trans_0(w)$ and $\trans_1(w)$.}

For another example, the security of machine learning often assumes a certain malicious adversary that can manipulate a given dataset to make a machine learning task fail.
Such adversarial operations $\trans$ on \allowbreak datasets can also be formalized using a different modal operator corresponding to $\trans$ as we will explain in Sect.~\ref{sec:ML:robustness}.

\modified{In the next section, we show that the logical connective $\erightarrow$ can be re-interpreted as the modality $\MTDt$ for some dataset transformation $T$.}

\subsection{Modality for Conditioning}
\label{sub:modal:conditioning}

We then present another interpretation of the logical connective $\erightarrow$ (defined in Sect.~\ref{sub:interpretation}) used to express conditional probabilities 
\modified{in Sects.~\ref{sec:ML:performance-prediction} and~\ref{sec:ML:robustness}}.
Roughly speaking, we regard the restriction $w|_{\psi}$ of a world $w$ to a static formula $\psi$ as a transformation $\Rpsi$ of $w$.
Then we redefine $\erightarrow$ as a modal operator associated with $\Rpsi$, and call it the \emph{conditioning operator}.
Formally, the interpretation of $\erightarrow$ is defined as follows.

\begin{definition}[Conditioning operator $\erightarrow$]\label{def:modal:conditioning} \rm
Assume that the universe $\calw$ includes all sub-multisets of each $w\in\calw$.
Given a static formula $\psi$, we define an accessibility relation as the \emph{conditioning relation}
$\Rpsi \eqdef \{ (w, w|_{\psi}) \mid w\in\calw \}$.
Then the interpretation of the conditioning operator $\erightarrow$ is given by:
\begin{align*}
&~~\M, w \models \psi \erightarrow \phi
\\ &~\mbox{ iff }~
\mbox{there is a $w'$ s.t. $(w, w')\in\Rpsi$ and }~
\M, w' \models \phi
{.}
\end{align*}
\end{definition}

Intuitively, $w \models \psi \erightarrow \phi$ corresponds to the two operations: 
(i) transforming the given dataset $w$ to the sub-dataset $w|_{\psi}$ and 
(ii) testing whether a property $\phi$ holds for the sub-dataset $w|_{\psi}$.
When no data in the dataset $w$ satisfies the property $\psi$, we can describe this as $\M, w \models \psi \erightarrow \bot$
\modified{by using the propositional constant falsum~$\bot$}.

\modified{Note that the conditioning $\psi \erightarrow \phi$ can be regarded as the modal formula $\MTDt \phi$ with the dataset transformation $T$ where $T(w) = w|_{\psi}$ for all $w\in\calw$.}

\modified{In Sects.~\ref{sec:ML:performance-prediction} and~\ref{sec:ML:robustness}, we show concrete examples using the conditioning operator~$\erightarrow$, 
i.e., the classification performance and robustness of statistical classifiers.}

\subsection{Modality for Conditional Indistinguishability}
\label{sub:modal:indistinguishable}

Next, we introduce a modal operator that is used to formalize the fairness of machine learning in Sect.~\ref{sec:ML:fairness}.

Given two static formulas $\psi_0, \psi_1$ (e.g., representing male and female), $w|_{\psi_0}(x)$ (resp. $w|_{\psi_1}(x)$) represents the probability distribution of values of a measurement variable $x$ generated from the sub-dataset $w|_{\psi_0}$, e.g., the sub-dataset about male (resp. $w|_{\psi_1}$, e.g., about female).
To formalize a certain similarity between $x$'s values generated from the two sub-datasets (e.g., between the benefits for male and for female), we introduce a modal operator $\IND{x}{\varepsilon,D}$ for conditional indistinguishability as follows.
We write $\psi_0 \IND{x}{\varepsilon,D} \psi_1$ to represent that the two distributions $w|_{\psi_0}(x)$ and $w|_{\psi_1}(x)$ are indistinguishable up to a threshold $\varepsilon$ in terms of a divergence or distance $D$.
Formally, this modality is defined as follows.\footnote{The semantics for the (binary) composite operator in the arrow logic~\cite{Blackburn:01:book} resembles that for $\IND{x}{\varepsilon,D}$ in Definition~\ref{def:modal:IND}, although it has a totally different meaning and motivation.}

\begin{definition}[Conditional indistinguishability operator $\IND{x}{\varepsilon,D}$]\label{def:modal:IND} \rm
\!Assume that the universe $\calw$ includes all sub-multisets of each $w\in\calw$.
Given an $x\in\Var$, an $\varepsilon\in\realsnng$, and a divergence or distance $D: \Dists\calo\times\Dists\calo \rightarrow \realsnng$, we define an accessibility relation by:
\begin{align*}
\Rind \eqdef \{ (w_0, w_1)\in\calw\times\calw \,|\,
\diverge{\sigma_{w_0}(x)\!}{\!\sigma_{w_1}(x)} \le \varepsilon \}
{.}
\end{align*}
Then for static formulas $\psi_0$ and $\psi_1$, we define the interpretation of $\psi_0 \IND{x}{\varepsilon,D} \psi_1$ by:
\begin{align*}
&\M, w \models \psi_0 \IND{x}{\varepsilon,D} \psi_1
\\
&\mbox{iff }~
\mbox{ there exist $w_0, w_1$~ s.t. $(w, w_0) \in \RpsiA$, }
\\&\hspace{3.5ex}
\mbox{ $(w, w_1)\in\RpsiB$, and $(w_0, w_1) \in \Rind$}
{,}
\end{align*}
where $\RpsiA$ and $\RpsiB$ are two conditioning relations in Definition~\ref{def:modal:conditioning}.
\end{definition}

Note that two worlds are related by $\Rind$ if they have close probability distributions of the values of $x$.
Intuitively, $w \models \psi_0 \IND{x}{\varepsilon,D} \psi_1$ corresponds to the two operations: 
(i) transforming the given dataset $w$ to the two sub-datasets $w|_{\psi_0}$ and $w|_{\psi_1}$, and
(ii) testing whether the probability distribution of $x$ generated by the dataset $w|_{\psi_0}$ is indistinguishable from \modified{the distribution generated} by the dataset~$w|_{\psi_1}$.

When $\varepsilon = 0$, the operator $\IND{x}{\varepsilon,D}$ represents the identity of two distributions.

\begin{restatable}{prop}{CondIND}
\label{prop:conditional-identity}
For a world $w$, static formulas $\psi_0$, $\psi_1$, and a measurement variable $x$,
$w \models \psi_0 \IND{x}{0,D} \psi_1$
iff
the distribution $w|_{\psi_0}(x)$ is identical to $w|_{\psi_1}(x)$.
\end{restatable}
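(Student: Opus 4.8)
The plan is to unfold the semantics of $\IND{x}{0,D}$ from Definition~\ref{def:modal:IND} and observe that when $\varepsilon = 0$ the accessibility relation $\Rind[0]$ collapses to an equality of distributions. First I would note that by Definition~\ref{def:modal:conditioning}, the conditioning relations $\RpsiA$ and $\RpsiB$ are functional: the only world $w_0$ with $(w,w_0)\in\RpsiA$ is $w_0 = w|_{\psi_0}$, and similarly the only $w_1$ with $(w,w_1)\in\RpsiB$ is $w_1 = w|_{\psi_1}$ (using the standing assumption that $\calw$ contains all sub-multisets of $w$, so these witnesses exist). Hence the existential quantifier in the semantics of $\psi_0 \IND{x}{0,D} \psi_1$ is really a statement about the single pair $(w|_{\psi_0}, w|_{\psi_1})$, namely that $(w|_{\psi_0}, w|_{\psi_1})\in\RIND{x}{0,D}$.

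Next I would substitute $\varepsilon = 0$ into the definition of $\RIND{x}{\varepsilon,D}$, which gives the condition $\diverge{\sigma_{w|_{\psi_0}}(x)}{\sigma_{w|_{\psi_1}}(x)} \le 0$. Since $D$ is a divergence or distance taking values in $\realsnng$, this is equivalent to $\diverge{\sigma_{w|_{\psi_0}}(x)}{\sigma_{w|_{\psi_1}}(x)} = 0$. Here I would invoke the one genuine mathematical input: a divergence (in particular total variation) or a metric (in particular $\Winfu$) vanishes exactly when its two arguments coincide, i.e.\ $D(\mu_0,\mu_1) = 0 \iff \mu_0 = \mu_1$. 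Therefore the condition is equivalent to $\sigma_{w|_{\psi_0}}(x) = \sigma_{w|_{\psi_1}}(x)$, which — recalling the notation $w|_\psi(x)$ for the distribution $\sigma_{w|_\psi}(x)$ from Sect.~\ref{sub:modal:indistinguishable} — is precisely the assertion that $w|_{\psi_0}(x)$ is identical to $w|_{\psi_1}(x)$. Chaining these equivalences gives the claim in both directions.

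The only point requiring care — the ``main obstacle,'' though it is mild — is the identity-of-indiscernibles property of $D$. The paper does not state a blanket axiom that every $D$ it considers separates points, so strictly the proposition should be read as applying to the distances/divergences actually used in the paper (total variation and $\Winfu$), for which this is immediate: for total variation, $\tvdiverge{\mu_0}{\mu_1} = 0$ forces $|\mu_0[R] - \mu_1[R]| = 0$ for all $R$, in particular for all singletons, hence $\mu_0 = \mu_1$; for $\Winfu$ it follows because $\utmetric$ is a metric, so a coupling achieving maximal displacement $0$ must be supported on the diagonal, forcing equal marginals. A secondary bookkeeping point is the undefinedness of $w|_\psi$ when no state satisfies $\psi$; I would simply note that in that degenerate case there is no witness $w_0$ (or $w_1$), so both sides of the ``iff'' are vacuously false, and the equivalence still holds. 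Everything else is routine unwinding of definitions.
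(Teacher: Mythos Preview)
Your proposal is correct and follows essentially the same route as the paper: the paper factors the semantics-unfolding step into a separate Lemma~\ref{lem:IND-divergence} (that $w \models \psi_0 \IND{x}{\varepsilon,D} \psi_1$ iff $\diverge{\sigma_{w|_{\psi_0}}(x)}{\sigma_{w|_{\psi_1}}(x)} \le \varepsilon$) and then declares the proposition ``immediate'' from it at $\varepsilon=0$, whereas you inline that unfolding directly. If anything you are more careful than the paper, since you make explicit the identity-of-indiscernibles assumption on $D$ and address the degenerate case where $w|_{\psi_b}$ is undefined, both of which the paper leaves implicit.
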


This proposition is immediate from the following lemma.

\begin{restatable}{lem}{INDdivergence}
\label{lem:IND-divergence}
For a world $w$, static formulas $\psi_0$, $\psi_1$, and a measurement variable $x$,
\begin{align*}
w \models \psi_0 \IND{x}{\varepsilon,D} \psi_1
\mbox{ iff }
\diverge{\sigma_{w|_{\psi_0}}(x)}{\sigma_{w|_{\psi_1}}(x)} \le \varepsilon
{.}
\end{align*}
\end{restatable}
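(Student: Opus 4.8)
The plan is to prove the biconditional by simply unfolding the two nested layers of Definition~\ref{def:modal:IND} and observing that the existential quantifiers over the intermediate worlds $w_0$ and $w_1$ are witnessed \emph{uniquely}, so the whole statement collapses to the claimed divergence bound. There is no real inductive or quantitative content here; the work is entirely in tracking which relation constrains which world.

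First I would record that the conditioning relations $\RpsiA$ and $\RpsiB$ from Definition~\ref{def:modal:conditioning} are partial functions: a pair $(w, w_0)$ lies in $\RpsiA$ exactly when the restriction $w|_{\psi_0}$ is defined and $w_0 = w|_{\psi_0}$, and symmetrically for $\RpsiB$ and $\psi_1$. Here I would invoke the standing assumption in Definition~\ref{def:modal:IND} that $\calw$ contains every sub-multiset of each of its members, which guarantees that $w|_{\psi_0}$ and $w|_{\psi_1}$, whenever defined, genuinely belong to $\calw$ and can therefore serve as witnesses. Then I would substitute these forced values into the third conjunct $(w_0, w_1) \in \Rind$ of Definition~\ref{def:modal:IND}: by the definition of $\Rind$, this membership is equivalent to $\diverge{\sigma_{w|_{\psi_0}}(x)}{\sigma_{w|_{\psi_1}}(x)} \le \varepsilon$. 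Chaining the equivalences gives that $\M, w \models \psi_0 \IND{x}{\varepsilon,D} \psi_1$ holds iff $w|_{\psi_0}$ and $w|_{\psi_1}$ are both defined and that divergence is at most $\varepsilon$, which is exactly the right-hand side of the lemma once one notes that writing $\sigma_{w|_{\psi_0}}(x)$ and $\sigma_{w|_{\psi_1}}(x)$ already presupposes both restrictions are defined.

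The only point requiring a sentence of care — not a genuine obstacle — is the degenerate case in which one restriction, say $w|_{\psi_0}$, is undefined (no state satisfying $\psi_0$ has positive probability in $w$). Then no $w_0$ with $(w, w_0) \in \RpsiA$ exists, so the left-hand side fails, and on the right-hand side the distribution $\sigma_{w|_{\psi_0}}(x)$ is likewise undefined, so by convention the inequality is not satisfied either; I would state this convention explicitly so the biconditional holds in all cases. Finally, Proposition~\ref{prop:conditional-identity} is then immediate by specializing to $\varepsilon = 0$: since $D$ is non-negative and, for the distances/divergences considered here (total variation and the $\infty$-Wasserstein metric), vanishes precisely on the diagonal, $\diverge{\sigma_{w|_{\psi_0}}(x)}{\sigma_{w|_{\psi_1}}(x)} \le 0$ is equivalent to $\sigma_{w|_{\psi_0}}(x) = \sigma_{w|_{\psi_1}}(x)$, i.e.\ to the identity of $w|_{\psi_0}(x)$ and $w|_{\psi_1}(x)$.
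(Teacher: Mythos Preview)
Your argument is correct and follows essentially the same route as the paper's own proof: set $w_0 = w|_{\psi_0}$, $w_1 = w|_{\psi_1}$, observe via Definition~\ref{def:modal:conditioning} that these are the unique witnesses for $(w,w_0)\in\RpsiA$ and $(w,w_1)\in\RpsiB$, and then read off the divergence bound from the definition of $\Rind$. You are simply more explicit than the paper about the uniqueness of the witnesses and the degenerate undefined-restriction case, and you additionally spell out the $\varepsilon=0$ specialization for Proposition~\ref{prop:conditional-identity}; none of this departs from the paper's approach.
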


\begin{proof}
Let $w_0 = w|_{\psi_0}$ and $w_1 = w|_{\psi_1}$.
Then by Definition~\ref{def:modal:conditioning}, we have 
$(w, w_0) \in \RpsiA$ and $(w, w_1)\in\RpsiB$.
Hence this lemma follows from Definition~\ref{def:modal:IND}.
\qed
\end{proof}

\modified{In Sect.~\ref{sec:ML:fairness}, we present examples using the conditional indistinguishability operator,
i.e., we formalize various notions of fairness in machine learning by using this operator and the above proposition and lemma.}

\subsection{Summary on the Modal Language}
\label{sub:modal:summary}

In summary, modal operators are used to represent transformation and testing on datasets.
The unary modal operator $\MTDt$ is regarded as a transformation $\trans$ on datasets, 
while the binary modal operators $\erightarrow$ and $\IND{x}{\varepsilon,D}$ are regarded as transforming-then-testing on datasets.

Now the syntax of the formulas is given by:
\begin{align*}
&\mbox{Static formulas:}~
\\[-0.5ex]&
\psi \mathbin{::=}
 \gamma(x_1, x_2, \ldots, x_n) \mid
 \neg \psi \mid \psi \wedge \psi
\\
&\mbox{Dataset formulas:}~
\\[-0.5ex]&
\phi \mathbin{::=}
 \PR{I} \psi \,|\, \neg \phi \,|\, \phi \wedge \phi \,|\,
 \MTDt \phi \,|\, \psi \erightarrow \phi \,|\, \psi_0 \IND{x}{\varepsilon,D} \psi_1 \,|\,
 \MKa \phi,
\end{align*}
where the epistemic formulas with the additional modality are called \emph{dataset formulas}, since they are interpreted in a world that corresponds to a dataset.

When multiple transformations/testing are sequentially applied to datasets, we can use dataset formulas in which different modal operators are nested.
For example, $w \models \MTDt (\psi \erightarrow \phi)$ represents that after applying a data preparation $T$ to a dataset $w$, a property $\phi$ holds for the sub-dataset $T(w)|_{\psi}$ that satisfies $\psi$.

\section{Epistemic Model for Supervised Learning}
\label{sec:formal:ML}

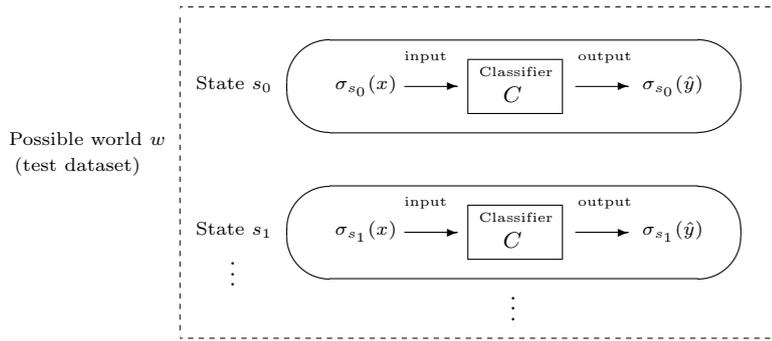
\begin{figure*}[t]
\centering
\begin{picture}(310, 128)
 \put(73,94){\scriptsize State $s_0$}
 \put(192,95){\oval(170,35)}
 \put(175,85){\framebox(35,20)}
 \put(151,105){\tiny input}
 \put(125,94){\scriptsize $\sigma_{s_0}(x)$}
 \put(151,95){\vector(1,0){20}}
 \put(216,105){\tiny output}
 \put(240,94){\scriptsize $\sigma_{s_0}(\hy)$}
 \put(215,95){\vector(1,0){20}}
 \put(179,99){\tiny Classifier}
 \put(188,89){$C$}
 \put(73,39){\scriptsize State $s_1$}
 \put(192,40){\oval(170,35)}
 \put(175,30){\framebox(35,20)}
 \put(151,50){\tiny input}
 \put(125,39){\scriptsize $\sigma_{s_1}(x)$}
 \put(151,40){\vector(1,0){20}}
 \put(216,50){\tiny output}
 \put(240,39){\scriptsize $\sigma_{s_1}(\hy)$}
 \put(215,40){\vector(1,0){20}}
 \put(179,44){\tiny Classifier}
 \put(188,34){$C$}
 \put(85,19){\rotatebox{90}{$\cdots$}}
 \put(190,6){\rotatebox{90}{\footnotesize $\cdots$}}
 \put(3,73){\scriptsize Possible world $w$}
 \put(5,63){\scriptsize (test dataset)}
 \linethickness{0.1pt}
 \put(67,0){\dashbox{2}(225,125)}
\end{picture}
\caption{
A world $w$ is chosen non-deterministically and corresponds to a test dataset.
With probability $w[s_i]$, the world $w$ is in a deterministic state $s_i$ where the classifier $C$ receives the input value $\sigma_{s_i}(x)$ and returns the output value $\sigma_{s_i}(\hy)$.
Each state $s_i$ can be regarded as a tuple $(\sigma_{s_i}(x), \sigma_{s_i}(y), \sigma_{s_i}(\hy)) \in\cald\times\Label\times\Label$ consisting of an input datum, an actual label, and a predicted label.
\label{fig:states}}
\end{figure*}

In this section we introduce a formal model for supervised learning.
Specifically, we employ a distributional Kripke model (Definition~\ref{def:dist-Kripke-model}), and formalize a behavior of a classifier $C$ and a non-deterministic input $x$ from an adversary in the model.
In this formalization, we focus only on the testing of supervised learning models, and do \emph{not} formalize the training of supervised learning models or learning algorithms themselves.

\subsection{Classification Problems}
\label{sub:ML:classifications}

\emph{Multiclass classification} is the problem of classifying a given input into one of multiple classes.
Let $\Label$ be a finite set of \emph{class labels}%
\footnote{The regression can be regarded as the classification problem when the label ranges over the real numbers, hence it can be formalized using a distributional Kripke model analogously. For simplicity, however, we deal only with the classification problems in this paper.}, 
and $\cald$ be a finite set of \emph{input data} (called \emph{feature vectors}) that we want to classify.
Then a \emph{classifier} is a function $C: \cald\rightarrow\Label$ that receives an input  datum $v$ and predicts which class (among $\Label$) the input $v$ belongs to.
In this work, we deal with a situation where some classifier $C$ has already been obtained and its properties should be evaluated, and do \emph{not} model or reason about how classifiers are trained from a training dataset.

We assume a \emph{scoring function} $f: \cald\times\Label \rightarrow \reals$ that gives a score $f(v, \ell)$ of predicting the class of an input datum (feature vector) $v$ as a label $\ell$.
Then for each input $v\in\cald$, we denote by $H(v) = \ell$ to represent that a label $\ell$ maximizes $f(v, \ell)$.
For example, when the input $v$ is an image of an animal and $\ell$ is the animal's name, then $H(v) = \ell$ may represent that an oracle (or a ``human'') classifies the image $v$ as $\ell$.

\subsection{Modeling the Behaviors of Classifiers}
\label{sub:ML:predicates}

A classifier is formalized on a distributional Kripke \allowbreak model 
$\M =(\calw, (\calr_a)_{a\in\cala}, \allowbreak (V_s)_{s\in\cals})$ 
with $\calw = \Dists\cals$.
Then $\calw$ is an infinite set of possible worlds that corresponds to 
\emph{all possible datasets} \modified{from which the classifier can receive input data}.
We denote by $\wtest \in \calw$ a real world that corresponds to a test dataset.
Recall that each world $w\in\calw$ is a multiset of states over $\cals$ and is associated with a stochastic assignment $\sigma_w: \Var \rightarrow \Dists\calo$ that is consistent with the deterministic assignments~$\sigma_s$ for all $s\in w$, as explained in Sect.~\ref{sub:Kripke}.

We present an overview of our formalization in Fig.~\ref{fig:states}.
We denote by $x\in\Var$ an input datum given to the classifier $C$ (and to the oracle $H$), by $y\in\Var$ a correct label given by the oracle $H$, and by $\hy\in\Var$ a label predicted by $C$.
We assume that the input variable $x$ (resp. the output variables $y,\hy$) ranges over the set $\cald$ of input data (resp. the set $\Label$ of labels);
i.e., the deterministic assignment $\sigma_s$ at each state $s\in\cals$ has the range $\calo = \cald \cup \Label$ and satisfies $\sigma_s(x)\in\cald$ and $\sigma_s(y), \sigma_s(\hy)\in\Label$.

A key idea in our modeling is that we describe logical aspects of statistical properties in the syntax level by using logical formulas, and model statistical distances and dataset operations in the semantics level by using accessibility relations in the distributional Kripke model.
In this way, we can formalize various statistical properties of classifiers in a simple and abstract way.

To formalize the classifier $C$, we introduce a static formula $\psi(x, \hy)$ to represent that $C$ classifies a given input $x$ as a class $\hy$.
We also introduce a static formula $h(x, y)$ to represent that $y$ is the actual class of an input $x$.
As an abbreviation, we write $\psi_\ell(x)$ (resp. $h_\ell(x)$) to denote $\psi(x, \ell)$ (resp. $h(x, \ell)$).
Formally, these static formulas are interpreted at each state $s\in\cals$ as follows:
\begin{align*}
s \models \psi(x, \hy) &~\mbox{ iff }~
C(\sigma_s(x)) = \sigma_s(\hy).
\\
s \models h(x, y) &~\mbox{ iff }~
H(\sigma_s(x)) = \sigma_s(y).
\end{align*}

\subsection{Modeling the Non-deterministic Inputs from Adversaries}
\label{sub:ML:non-deterministic-inputs}

We first observe that a distributional Kripke model $\M$ can formalize an input $x$ that is probabilistically chosen from a given dataset.
As explained in Sect.~\ref{sub:Kripke},
each world $w$ corresponds to a \emph{test dataset}.
When a state $s$ is drawn from a multiset $w$ of states, an input value $\sigma_s(x)$ is sampled from the distribution $\sigma_w(x)$, and assigned to the measurement variable~$x$.
The set of all possible probability distributions of inputs is represented by
$\Lambda \eqdef \left\{ \sigma_w(x) \mid w\in\calw \right\}$,
which is possibly an infinite set.

For example, let us consider testing the classifier $C$ with the actual test dataset $\sigma_{\wtest}(x)$.
When $C$ classifies an input $x$ as a label $\ell$ with probability $0.2$, i.e.,
\[
\Pr\!\left[~ v \randassign \sigma_{\wtest}(x) \,:\, 
C(v) = \ell ~\right] = 0.2
,\]
then this can be expressed by:
\begin{align*}
\M, \wtest \models \PR{0.2} \psi_\ell(x)
{.}
\end{align*}

Next we observe that our model can formalize a non-deterministic input $x$ from an adversary as follows.
Although each state $s$ in a possible world $w$ is assigned the probability $w[s]$, each world $w$ itself is not assigned a probability.
Thus, each input distribution $\sigma_w(x) \in \Lambda$ itself is also not assigned a probability, hence our model assumes no probability distribution over $\Lambda$.
In other words, we assume that a world $w$ and thus an input distribution $\sigma_w(x)$ are non-deterministically chosen.
This is useful to model an adversary that provides malicious inputs to the classifier $C$ to make its prediction fail,
because we usually do not have a prior knowledge of the probability distribution of malicious inputs from adversaries, and need to reason about the worst cases caused by the attack.
In Sect.~\ref{sec:ML:robustness}, this formalization of non-deterministic inputs is used to express the robustness of classifiers.

Finally, it should be noted that we cannot enumerate all possible adversarial inputs, hence cannot enumerate all possible datasets to construct the universe $\calw$.
Since $\calw$ can be an infinite set and is unspecified, we cannot check whether a formula expressing a security property against an adversary is satisfied in all possible worlds of $\calw$.
Nevertheless, as shown in later sections, describing various properties using our extension of \StatEL{} is useful to explore desirable properties and to discuss relationships among them.

\section{Formalizing the Classification Performance}
\label{sec:ML:performance-prediction}
In this section we show a formalization of classification performance using our extension of \StatEL{}.
We formalize popular measures of classification performance, including precision, recall, and accuracy,
and measures for evaluating overfitting, such as the generalization error.
See Fig.~\ref{fig:performance-generalization} for basic ideas on these formalizations.

\begin{table*}[t]
  \caption{Logical description of the table of confusion \label{table:confusion}}
  \centering
  \scalebox{0.9}{
  \begin{tabular}{|l|l|l|l|l|} \cline{1-3}
    \multicolumn{1}{|c|}{}& \multicolumn{2}{c|}{Actual class} \\ \cline{2-5}
    \multicolumn{1}{|c|}{}&
    \multicolumn{1}{c|}{positive} & \multicolumn{1}{c|}{negative}
    &\!$\Prevalence_{\ell,I}(x)$
    &\!$\Accuracy_{\ell,I}(x)$ \\[-0.3ex]
    \multicolumn{1}{|c|}{}&
    \multicolumn{1}{c|}{$h_\ell(x)$} & \multicolumn{1}{c|}{$\neg h_\ell(x)$}
    &\!$\eqdef\!\PR{I}(h_\ell(x))$
    &\!$\eqdef\!\PR{I}(\psi_\ell(x) \leftrightarrow h_\ell(x))$\! \\[0.5ex] \hline
    \!Positive & & & & \\[-1.1ex]
    \!prediction\! & ~$\TP(x) \eqdef$ & ~$\FP(x) \eqdef$ 
    & $\Precision_{\ell,I}(x) \eqdef$
    & $\FDR_{\ell,I}(x) \eqdef$ \\
    $\psi_\ell(x)$ &
    ~$\psi_\ell(x) \wedge h_\ell(x)$ & ~$\psi_\ell(x) \wedge \neg h_\ell(x)$
    & $\psi_\ell(x) \erightarrow \PR{I} h_\ell(x)$
    & $\psi_\ell(x) \erightarrow \PR{I} \neg h_\ell(x)$ \\[0.5ex] \cline{1-5}
    \!Negative & & & & \\[-1.1ex]
    \!prediction\! & ~$\FN(x) \eqdef$ & ~$\TN(x) \eqdef$ 
    & $\FOR_{\ell,I}(x) \eqdef$
    & $\NPV_{\ell,I}(x) \eqdef$ \\
    $\neg\psi_\ell(x)$ &
    ~$\neg \psi_\ell(x) \wedge h_\ell(x)$ & ~$\neg \psi_\ell(x) \wedge \neg h_\ell(x)$
    & $\neg \psi_\ell(x) \erightarrow \PR{I} h_\ell(x)$\!
    & $\neg \psi_\ell(x) \erightarrow \PR{I} \neg h_\ell(x)$ \\[0.5ex] \hline
    \multicolumn{1}{c|}{} & \!$\Recall_{\ell,I}(x) \eqdef$~ 
    & \!$\FallOut_{\ell,I}(x) \eqdef$~ \\
    \multicolumn{1}{c|}{} & $h_\ell(x) \erightarrow \PR{I} \psi_\ell(x)$~~
    & \!$\neg h_\ell(x) \erightarrow \PR{I} \psi_\ell(x)$~~ \\[0.5ex] \cline{2-3}
    \multicolumn{1}{c|}{} & \!$\MissRate_{\ell,I}(x) \eqdef$~ 
    & \!$\Specificity_{\ell,I}(x) \eqdef$~ \\
    \multicolumn{1}{c|}{} & $h_\ell(x) \erightarrow \PR{I} \neg\psi_\ell(x)$
    & \!$\neg h_\ell(x) \erightarrow \PR{I} \neg\psi_\ell(x)$ \\[0.5ex] \cline{2-3}
  \end{tabular}
}
\end{table*}

\subsection{Classifier's Prediction and its Correctness}
\label{sub:ML:predict-correct}

In classification problems, the terms \emph{positive}/\emph{negative} represent the result of the classifier's prediction, and the terms \emph{true}/\emph{false} represent whether the classifier predicts correctly or not.
Then the following terminologies are commonly used:
\begin{itemize}
\item \emph{true positive} ($\TP$): both the prediction and actual class are positive;
\item \emph{true negative} ($\TN$): both the prediction and actual class are negative;
\item \emph{false positive} ($\FP$): the prediction is positive but the actual class is negative;
\item \emph{false negative} ($\FN$): the prediction is negative but the actual class is positive.
\end{itemize}
These terminologies can be formalized using static formulas as shown in Table~\ref{table:confusion}.
For example, when an input $x$ shows true positive at a state $s$, this can be expressed as $s \models \psi_\ell(x) \wedge h_\ell(x)$.
Note that the value of the measurement variable $x$ is uniquely determined by the assignment $\sigma_s$ at the state $s$.
True negative, false positive (type I error), and false negative (type II error) are respectively expressed as 
$s \models \neg \psi_\ell(x) \wedge \neg h_\ell(x)$,\, 
$s \models \psi_\ell(x) \wedge \neg h_\ell(x)$, and 
$s \models \neg \psi_\ell(x) \wedge h_\ell(x)$.

\begin{figure*}[t]
\centering
\begin{tikzpicture}
\coordinate (W0) at (-2.0,2.9) node at (W0) [right] {{\scriptsize Real world $\wtest$}};
\coordinate (W0) at (-2.0,2.6) node at (W0) [right] {{\scriptsize with a test dataset}};
\draw [black, dotted, name path=rectangle, rotate=0] (1.0,2.0) rectangle +(8.8,2.75);
\coordinate (W1) at (-2.0,0.9) node at (W1) [right] {{\scriptsize Possible world $\wtrain$}};
\coordinate (W0) at (-2.0,0.6) node at (W0) [right] {{\scriptsize with a training dataset}};
\draw [black, dotted, name path=rectangle, rotate=0] (1.0,-1.2) rectangle +(8.8,2.75);
\coordinate (Dataset) at (1.85,3.8) node at (Dataset) {{\scriptsize Distribution}};
\coordinate (Dataset) at (1.85,3.5) node at (Dataset) {{\scriptsize of test data}};
\filldraw [gray!15] (1.8,2.5) circle [x radius=6mm, y radius=2mm, rotate=0];
\draw [black, very thin] (1.8,2.5) circle [x radius=6mm, y radius=2mm, rotate=0];
\fill [gray!15!white] (1.2,2.5) rectangle (2.4,3);
\filldraw [gray!15] (1.8,3) circle [x radius=6mm, y radius=2mm, rotate=0];
\draw [black, very thin] (1.8,3) circle [x radius=6mm, y radius=2mm, rotate=0];
\draw [black, very thin] (1.2,2.5) -- (1.2,3.0);
\draw [black, very thin] (2.4,2.5) -- (2.4,3.0);
\coordinate (DB) at (1.83,2.6) node at (DB) {$\sigma_{\!\wtest}(x)$};
\coordinate (Dataset) at (2.0,0.05) node at (Dataset) {{\scriptsize Distribution of}};
\coordinate (Dataset) at (2.0,-0.25) node at (Dataset) {{\scriptsize training data}};
\filldraw [gray!15] (1.8,0.5) circle [x radius=6mm, y radius=2mm, rotate=0];
\draw [black, very thin] (1.8,0.5) circle [x radius=6mm, y radius=2mm, rotate=0];
\fill [gray!15!white] (1.2,0.5) rectangle (2.4,1);
\filldraw [gray!15] (1.8,1) circle [x radius=6mm, y radius=2mm, rotate=0];
\draw [black, very thin] (1.8,1) circle [x radius=6mm, y radius=2mm, rotate=0];
\draw [black, very thin] (1.2,0.5) -- (1.2,1.0);
\draw [black, very thin] (2.4,0.5) -- (2.4,1.0);
\coordinate (DB2) at (1.83,0.6) node at (DB2) {$\sigma_{\!\wtrain}(x)$};
\filldraw [gray!5!white, name path=rectangle, rotate=0] (5.2,3.6) rectangle +(1.3,1);
\draw [name path=rectangle, rotate=0] (5.2,3.6) rectangle +(1.3,1);
\coordinate (HI1) at (4.2,3.1) node at (HI1) {};
\coordinate (HI2) at (5.1,4.1) node at (HI2) {};
\coordinate (HL1) at (5.8,4.25) node at (HL1) [above] {{\scriptsize Oracle}};
\coordinate (HL2) at (5.8,3.97) node at (HL2) [above] {{\tiny (human)}};
\coordinate (H) at (6.0,3.87) node at (H) [left] {$H$};
\coordinate (Hout) at (6.6,4.1) node at (Hout) {};
\coordinate (HO) at (6.95,4.1) node at (HO) [right] {$\sigma_{\!s}(y)$};
\draw [->] (HI1)--(HI2);
\draw [->] (Hout)--(HO);
\coordinate (INPUT1) at (4.45,3.75) node at (INPUT1) [above] {{\tiny input}};
\coordinate (OUTPUT1) at (7.1,4.3) node at (OUTPUT1) [above] {{\tiny output}};
\filldraw [gray!5!white, name path=rectangle, rotate=0] (5.2,2.3) rectangle +(1.3,1);
\draw [name path=rectangle, rotate=0] (5.2,2.3) rectangle +(1.3,1);
\coordinate (SL) at (2.55,2.8) node at (SL) {};
\coordinate (S) at (2.7,3.15) node at (S) {~~~~~~{\tiny sampling}};
\coordinate (SR) at (3.45,2.8) node at (SR) {};
\coordinate (I) at (4.3,2.8) node at (I) [left] {$\sigma_{\!s}(x)$};
\coordinate (CI) at (5.1,2.8) node at (CI) {};
\coordinate (L) at (5.85,2.85) node at (L) [above] {{\scriptsize Classifier}};
\coordinate (C) at (6.0,2.65) node at (C) [left] {$C$};
\coordinate (CO) at (6.6,2.8) node at (CO) {};
\coordinate (O) at (6.95,2.8) node at (O) [right] {$\sigma_{\!s}(\hy)$};
\draw [->, densely dashed] (SL)--(SR);
\draw [->] (I)--(CI);
\draw [->] (CO)--(O);
\filldraw [gray!5!white, name path=rectangle, rotate=0] (5.2,0.3) rectangle +(1.3,1);
\draw [name path=rectangle, rotate=0] (5.2,0.3) rectangle +(1.3,1);
\coordinate (SL2) at (2.55,0.8) node at (SL2) {};
\coordinate (S2) at (2.7,1.15) node at (S2) {~~~~~~{\tiny sampling}};
\coordinate (SR2) at (3.45,0.8) node at (SR2) {};
\coordinate (I2) at (4.4,0.8) node at (I2) [left] {$\sigma_{\!s'}(x)$};
\coordinate (CI2) at (5.1,0.8) node at (CI2) {};
\coordinate (L2) at (5.85,0.85) node at (L2) [above] {{\scriptsize Classifier}};
\coordinate (C2) at (6.0,0.65) node at (C2) [left] {$C$};
\coordinate (CO2) at (6.6,0.8) node at (CO2) {};
\coordinate (O2) at (6.95,0.8) node at (O2) [right] {$\sigma_{\!s'}(\hy)$};
\draw [->, densely dashed] (SL2)--(SR2);
\draw [->] (I2)--(CI2);
\draw [->] (CO2)--(O2);
\filldraw [gray!5!white, name path=rectangle, rotate=0] (5.2,-1.0) rectangle +(1.3,1);
\draw [name path=rectangle, rotate=0] (5.2,-1.0) rectangle +(1.3,1);
\coordinate (H2I1) at (4.2,0.5) node at (H2I1) {};
\coordinate (H2I2) at (5.1,-0.5) node at (H2I2) {};
\coordinate (H2L1) at (5.8,-0.35) node at (H2L1) [above] {{\scriptsize Oracle}};
\coordinate (H2L2) at (5.8,-0.63) node at (H2L2) [above] {{\tiny (human)}};
\coordinate (H2) at (6.0,-0.73) node at (H2) [left] {$H$};
\coordinate (H2out) at (6.6,-0.5) node at (H2out) {};
\coordinate (H2O) at (6.95,-0.5) node at (H2O) [right] {$\sigma_{\!s'}(y)$};
\draw [->] (H2I1)--(H2I2);
\draw [->] (H2out)--(H2O);
\coordinate (INPUT2) at (4.45,-0.85) node at (INPUT2) [above] {{\tiny input}};
\coordinate (OUTPUT2) at (7.1,-1.05) node at (OUTPUT2) [above] {{\tiny output}};
\coordinate (CompD0) at (1.9,2.15) node at (CompD0) {};
\coordinate (CompD1) at (1.9,1.35) node at (CompD1) {};
\coordinate (CompHC0) at (7.15,3.85) node at (CompHC0) {};
\coordinate (CompHC1) at (7.15,3.1) node at (CompHC1) {};
\coordinate (Performance1) at (8.1,3.7) node at (Performance1) {{\scriptsize Performance}};
\coordinate (Loss1) at (8.55,3.3) node at (Loss1) {{\scriptsize Generalization error}};
\draw [<->,double] (CompHC0)--(CompHC1);
\coordinate (General12) at (8.8,1.78) node at (General12) {{\scriptsize Overfitting}};
\coordinate (CompL0) at (8.05,3.0) node at (CompL0) {};
\coordinate (CompL1) at (8.05,0.35) node at (CompL1) {};
\draw [<->,double] (CompL0)--(CompL1);
\coordinate (CompHC0) at (7.15,0.5) node at (CompHC0) {};
\coordinate (CompHC1) at (7.15,-0.25) node at (CompHC1) {};
\coordinate (Loss2) at (8.2,0.1) node at (Loss2) {{\scriptsize Training error}};
\draw [<->,double] (CompHC0)--(CompHC1);
\end{tikzpicture}
\caption{
The classification performance compares the oracle $H$'s output with that of the classifier $C$'s,
while the evaluation of overfitting compares the expected loss by the test dataset with that by the training dataset.
\label{fig:performance-generalization}}
\end{figure*}
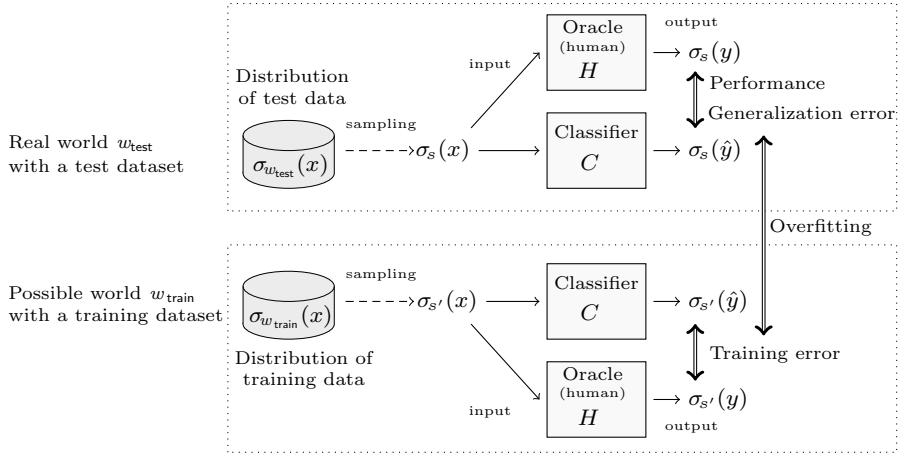

\subsection{Precision, Recall, Accuracy, and Other Performance Measures}
\label{sub:ML:correctness:measures}

Next we formalize three popular measures for binary classification performance: \emph{precision}, \emph{recall}, and \emph{accuracy}.
In Table~\ref{table:confusion} we summarize the formalization of various notions of classification performance using our dataset formulas.

In theory, these notions should be formalized with the infinite dataset $\wtrue$ representing the true distribution.
However, we usually cannot obtain $\wtrue$ or test the performance measures using $\wtrue$.
Hence, we often sample a finite test dataset $\wtest$ from the true distribution and regard it as an approximation of $\wtrue$.%
\footnote{Since the test dataset $\wtest$ is finite, there can be \emph{missing data} that are not included in $\wtest$ but are sampled from the true distribution $\wtrue$ with a very small probability.}

Given a test dataset $\wtest$, \emph{precision} (\emph{positive predictive value}) is defined as the conditional probability that the prediction is correct given that the prediction is positive; i.e., 
${\it precision} = \frac{\TP}{\TP + \FP}$.
Since the probability distribution of the input $x$ in the world $\wtest$ is expressed by $\sigma_{\wtest}(x)$ as explained in Sect.~\ref{sub:ML:non-deterministic-inputs},
the precision being within an interval $I$ is given by:
\begin{align*}
\Pr\!\left[~ v \randassign \sigma_{\wtest}(x) \,:\, 
H(v) = \ell ~\Big|~ C(v) = \ell ~\right] \in I
{,}
\end{align*}
which can be written as:
\begin{align*}
\Pr\!\left[~ s \randassign \wtest \,:\, 
s \models h_\ell(x) ~\Big|~ s \models \psi_\ell(x) ~\right] \in I
{.}
\end{align*}
By using \StatEL{}, this can be formalized as:
\begin{align*}
&\M, \wtest \models \Precision_{\ell,I}(x)
\\&
\mbox{where }~
\Precision_{\ell,I}(x) \eqdef \psi_\ell(x) \erightarrow \PR{I} h_\ell(x)
{.}
\end{align*}
Here $\erightarrow$ is the conditioning operator defined in Sect.~\ref{sub:modal:conditioning}.
The value of precision depends on the test dataset $\wtest$, and can be computed in finite time since $\wtest$ is finite.

Symmetrically, \emph{recall} (\emph{true positive rate}) is defined as the conditional probability that the prediction is correct given that the actual class is positive; i.e., 
${\it recall} = \frac{\TP}{\TP + \FN}$.
Then the recall being within $I$ is formalized as:
\begin{align*}
\Recall_{\ell,I}(x) \eqdef h_\ell(x) \erightarrow \PR{I} \psi_\ell(x)
{.}
\end{align*}

Finally, \emph{accuracy} is the probability that the classifier predicts correctly; i.e., 
${\it accuracy} = \frac{\TP + \TN}{\TP + \TN + \FP + \FN}$.
Then the accuracy being within $I$ is formalized~as:
\begin{align*}
\Accuracy_{\ell,I}(x) \eqdef
\PR{I}\bigl( \psi_\ell(x) \leftrightarrow h_\ell(x) \bigr)
{,}
\end{align*}
which can also be defined as 
$\PR{I}\bigl( \TP(x) \vee \TN(x) \bigr)$.
When we measure the accuracy after a data preparation operation $T$ (e.g., data cleaning) to the test dataset $\wtest$, this can be represented by
$\wtest \models \MTDt \Accuracy_{\ell,I}(x)$.

\begin{example}[Performance of pedestrian detection]\label{eg:human:performance}
\modified{Let us consider an autonomous car that uses a machine learning classifier
to detect a person crossing the road.
For the sake of simplicity, we formalize an example of a binary classifier $C$ that detects whether or not a pedestrian is crossing the road in a photo image in a test dataset $\wtest$.
We write $\sunny(x)$ (resp. $\snowy(x)$) to represent that a photo $x$ was taken on a sunny (resp. snowy) day.
Let $\psi_\ell(x)$ (resp. $h_\ell(x)$) represent that the classifier $C$ (resp. the human) detects a pedestrian crossing the road in an image~$x$.}

\modified{%
We empirically measure recall 
(i.e., the conditional probability that $C$ detects a pedestrian crossing the road when the input image $x$ actually includes it) by using the data 
collected on sunny days.
When $C$ achieves a recall of $0.95$ on sunny days, 
this is represented by
$\wtest \models \sunny(x) \erightarrow \Recall_{\ell,0.95}(x)$.}%

\modified{Since $C$ should detect a pedestrian also on a snow-covered road, it should be tested with the data collected on snowy days.
If we have a recall of $0.8$ 
on snowy days,
this is represented by
$\wtest \models \snowy(x) \erightarrow \Recall_{\ell,0.8}(x)$.}

\modified{More generally, if the classifier $C$ achieves a recall of more than $0.9$ in situations $\gamma_1, \gamma_2, \ldots, \gamma_m$, this can be represented by
$\wtest \models \bigwedge_{i=1}^{m} \bigl( \gamma_i(x) \erightarrow \Recall_{\ell,(0.9, 1]}(x) \bigr)$.
}
\end{example}

\begin{figure*}[t]
\centering
\begin{tikzpicture}
\coordinate (W0) at (-1.7,2.8) node at (W0) [right] {{\scriptsize Real world $\wtest$}};
\draw [black, dotted, name path=rectangle, rotate=0] (0.9,2.0) rectangle +(7.45,1.95);
\coordinate (W1) at (-1.7,0.8) node at (W1) [right] {{\scriptsize Possible world $w'$}};
\draw [black, dotted, name path=rectangle, rotate=0] (0.9,-0.4) rectangle +(7.45,1.95);
\coordinate (Dataset) at (1.75,3.8) node at (Dataset) {{\scriptsize Distribution}};
\coordinate (Dataset) at (1.75,3.5) node at (Dataset) {{\scriptsize of test data}};
\filldraw [gray!15] (1.8,2.5) circle [x radius=6mm, y radius=2mm, rotate=0];
\draw [black, very thin] (1.8,2.5) circle [x radius=6mm, y radius=2mm, rotate=0];
\fill [gray!15!white] (1.2,2.5) rectangle (2.4,3);
\filldraw [gray!15] (1.8,3) circle [x radius=6mm, y radius=2mm, rotate=0];
\draw [black, very thin] (1.8,3) circle [x radius=6mm, y radius=2mm, rotate=0];
\draw [black, very thin] (1.2,2.5) -- (1.2,3.0);
\draw [black, very thin] (2.4,2.5) -- (2.4,3.0);
\coordinate (DB) at (1.83,2.6) node at (DB) {$\sigma_{\!\wtest}(x)$};
\coordinate (Dataset) at (1.9,0.05) node at (Dataset) {{\scriptsize Distribution of}};
\coordinate (Dataset) at (1.9,-0.25) node at (Dataset) {{\scriptsize perturbed data}};
\filldraw [gray!15] (1.8,0.5) circle [x radius=6mm, y radius=2mm, rotate=0];
\draw [black, very thin] (1.8,0.5) circle [x radius=6mm, y radius=2mm, rotate=0];
\fill [gray!15!white] (1.2,0.5) rectangle (2.4,1);
\filldraw [gray!15] (1.8,1) circle [x radius=6mm, y radius=2mm, rotate=0];
\draw [black, very thin] (1.8,1) circle [x radius=6mm, y radius=2mm, rotate=0];
\draw [black, very thin] (1.2,0.5) -- (1.2,1.0);
\draw [black, very thin] (2.4,0.5) -- (2.4,1.0);
\coordinate (DB2) at (1.83,0.6) node at (DB2) {$\sigma_{\!w'}(x)$};
\filldraw [gray!5!white, name path=rectangle, rotate=0] (5.2,2.3) rectangle +(1.4,1);
\draw [name path=rectangle, rotate=0] (5.2,2.3) rectangle +(1.4,1);
\coordinate (SL) at (2.55,2.8) node at (SL) {};
\coordinate (S) at (2.7,3.15) node at (S) {~~~~~~{\tiny sampling}};
\coordinate (SR) at (3.45,2.8) node at (SR) {};
\coordinate (I) at (4.5,2.8) node at (I) [left] {$\sigma_{\!s}(x)$};
\coordinate (CI) at (5.1,2.8) node at (CI) {};
\coordinate (L) at (5.9,2.85) node at (L) [above] {{\scriptsize Classifier}};
\coordinate (C) at (6.1,2.65) node at (C) [left] {$C$};
\coordinate (CO) at (6.8,2.8) node at (CO) {};
\coordinate (O) at (7.3,2.8) node at (O) [right] {~$\sigma_{\!s}(\hy)$};
\draw [->, densely dashed] (SL)--(SR);
\draw [->] (I)--(CI);
\draw [->] (CO)--(O);
\coordinate (INPUT1) at (4.7,3.2) node at (INPUT1) [above] {{\tiny input}};
\coordinate (OUTPUT1) at (7.15,3.2) node at (OUTPUT1) [above] {{\tiny output}};
\filldraw [gray!5!white, name path=rectangle, rotate=0] (5.2,0.3) rectangle +(1.4,1);
\draw [name path=rectangle, rotate=0] (5.2,0.3) rectangle +(1.4,1);
\coordinate (SL2) at (2.55,0.8) node at (SL2) {};
\coordinate (S2) at (2.7,1.15) node at (S2) {~~~~~~{\tiny sampling}};
\coordinate (SR2) at (3.45,0.8) node at (SR2) {};
\coordinate (I2) at (4.55,0.8) node at (I2) [left] {$\sigma_{\!s'}(x)$};
\coordinate (CI2) at (5.1,0.8) node at (CI2) {};
\coordinate (L2) at (5.9,0.85) node at (L2) [above] {{\scriptsize Classifier}};
\coordinate (C2) at (6.1,0.65) node at (C2) [left] {$C$};
\coordinate (CO2) at (6.8,0.8) node at (CO2) {};
\coordinate (O2) at (7.3,0.8) node at (O2) [right] {~$\sigma_{\!s'}(\hy)$};
\draw [->, densely dashed] (SL2)--(SR2);
\draw [->] (I2)--(CI2);
\draw [->] (CO2)--(O2);
\coordinate (CompD0) at (1.8,2.15) node at (CompD0) {};
\coordinate (CompD1) at (1.8,1.35) node at (CompD1) {};
\draw [<->,double] (CompD0)--(CompD1);
\coordinate (CompL0) at (7.7,2.45) node at (CompL0) {};
\coordinate (CompL1) at (7.7,1.1) node at (CompL1) {};
\coordinate (RelationL) at (1.3,1.76) node at (RelationL) {$\Rrob$};
\coordinate (Robust) at (8.55,1.78) node at (Robust) {{\scriptsize Robustness}};
\draw [<->,double] (CompL0)--(CompL1);
\coordinate (CompHC0) at (7.7,4.0) node at (CompHC0) {};
\coordinate (CompHC1) at (7.7,3.15) node at (CompHC1) {};
\end{tikzpicture}
\caption{
The robustness compares the conditional probability in the test dataset $\wtest$ with that in another possible world $w'$ that is close to $\wtest$ in terms of $\Rrob$.
Note that an adversary's choice of the input distribution $\sigma_{w'}(x)$ is formalized as a non-deterministic choice of the possible world $w'$.
\label{fig:performance-robustness}}
\end{figure*}
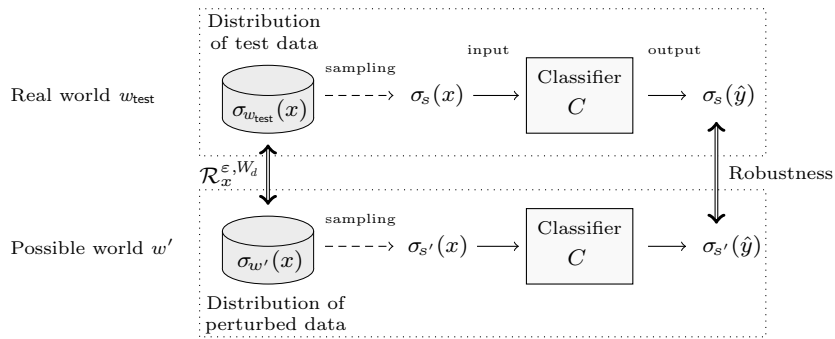

\subsection{Generalization Error}
\label{sub:ML:correctness:generalize}

We next formalize the \emph{generalization error} of a classifier, i.e., a measure of how accurately a classifier is able to predict the class of \emph{previously unseen} input data.
Since a classifier has been trained on a finite sample training dataset $\wtrain$, it may be \emph{overfitted} to $\wtrain$ and have worse classification performance on new input data that have not been included in $\wtrain$.

To formalize the generalization error, we introduce a formula $\Loss(y, \hy)$ to represent that given a correct label $y$ and a predicted label $\hy$, the expected value of losses (i.e., real numbers representing the penalty for incorrect classification) is at most
a non-negative real number $L$.
Formally, the semantics of $\Loss(y, \hy)$ is given by:
\begin{align*}
w \models \Loss(y, \hy)
~~\mbox{ iff }~
\expect_{(v, \hv) \sim \sigma_{w}(y, \hy) }\hspace{-2ex}
\loss(v, \hv) \le L
{,}
\end{align*}
where $\loss$ is a loss function selected according to the data domain $\calo$, and a pair $(v, v')$ of a correct label and a predicted label follows the joint distribution $\sigma_{w}(y, \hy)$.

Now the generalization error being $L$ or smaller at a true distribution $\wtrue$ is written as $\wtrue \models \GE_{L}(x, y, \hy)$ where:
\begin{align*}
\GE_{L}(x, y, \hy) \eqdef
\bigl( h(x, y) \wedge \psi(x, \hy) \bigr) \erightarrow \Loss(y, \hy)
{.}
\end{align*}

Since we usually cannot obtain the true distribution $\wtrue$ and cannot check the satisfaction $\wtrue \models \GE_{L}(x, y, \hy)$, we often compute an empirical error (as an approximation of the generalization error) by using a finite test dataset $\wtest$ that is believed to be an approximation of $\wtrue$.
This testing can be expressed as $\wtest \models \GE_{L}(x, y, \hy)$.

On the other hand, given a training dataset $\wtrain$, the \emph{training error} being at most $\Ltrain$ is represented by $\wtrain \models \GE_{\Ltrain}(x, y, \hy)$.
Then the overfitting of the classifier can be evaluated by comparing the empirical error $L$ with the training error $\Ltrain$.
When the empirical error is smaller than $\Ltrain + \varepsilon$ for some error bound $\varepsilon > 0$, this can be represented by $\wtest \models \GE_{\Ltrain + \varepsilon}(x, y, \hy)$.

\section{Formalizing the Robustness of Classifiers}
\label{sec:ML:robustness}
Many recent studies have found attacks on machine learning where a malicious adversary manipulates the input to cause a malfunction in a machine learning task~\cite{Chakraborty:18:arxiv}.
Such input data, called \emph{adversarial examples}~\cite{Szegedy:14:ICLR}, are designed to make a classifier fail to predict the actual class $\ell$ of the input, but are recognized to belong to $\ell$ from human eyes.
In computer vision, for example, Goodfellow et al.~\cite{Goodfellow:ICLR:15} create an adversarial example by adding undetectable noise to a panda's photo so that humans can still recognize the perturbed image as a panda, but a classifier misclassifies it as a gibbon.
To prevent or mitigate such attacks, the classifier should be \emph{robust} against perturbed input, i.e., 
it should return similar predicted labels given similar input data.

In this section we formalize robustness notions for classifiers by using epistemic operators in \StatEL{} (See Fig.~\ref{fig:performance-robustness} for an overview of the formalization).
Furthermore, we show certain relationships between classification performance and robustness, and suggest a class of robustness properties that have not been formalized in the literature as far as we know.
\modified{We present an overview of these formalizations and relationships in Fig.~\ref{fig:compare-robustness}.}

\subsection{Total Correctness of Classifiers}
\label{sub:security-classifiers}

We first note that the \emph{total correctness} of classifiers could be \modified{formalized} as a classification performance (e.g., precision, recall, or accuracy) in the presence of all possible inputs from adversaries.
For example, the total correctness could be formalized as $\M \models \Recall_{\ell,I}(x)$, which represents that $\Recall_{\ell,I}(x)$ is 
\modified{satisfied} in all possible worlds of $\M$.

In practice, however, it is not possible or tractable to test whether the classification performance is achieved for all possible test datasets
(corresponding to an infinite number of possible worlds in $\M$).
Hence we need a weaker form of a correctness notion, which may be verified or tested in some way.
In the following sections, we deal with robustness notions that are weaker than total correctness.

\subsection{Accessibility Relation for Robustness}
\label{sub:relation-robustness}

To formalize robustness notions, we introduce an accessibility relation $\Rrob$ that relates two worlds having closer inputs as follows.
\begin{definition}[Accessibility relation for robustness]\label{def:robustness:relation}\rm
We define an accessibility relation $\Rrob \subseteq \calw \times \calw$ by:
\[
\Rrob \eqdef
\left\{ (w, w') \in \calw\times\calw \,\mid\, \Winfu(\sigma_{w}(x),\, \sigma_{w'}(x)) \le \varepsilon
\right\}\!{,}
\]
where $\Winfu$ is $\infty$-Wasserstein distance 
w.r.t. a metric $\utmetric$ 
in Definition~\ref{def:p-Wasserstein-metric}.
\end{definition}

Then $ (w, w') \in \Rrob$ represents that the two distributions $\sigma_{w}(x)$ and $\sigma_{w'}(x)$ of inputs to the classifier $C$ are close in terms of the distance~$\Winfu$.%
\footnote{$\Winfu(\sigma_{w}(x),\, \sigma_{w'}(x)) \le \varepsilon$ expresses that each value of the input $x$ from the dataset $w$ is close to the corresponding value of $x$ from $w'$ in terms of the metric $\utmetric$ between individual data.
For example, each input image $x$ in the dataset $w$ looks similar to the corresponding image in $w'$ from the human' eyes.}
Intuitively, for example, $\Winfu$ means the distance between two image datasets $\sigma_{w}(x)$ and $\sigma_{w'}(x)$ when the distance between individual images are measured by a metric~$\utmetric$.

Then an epistemic formula $\MKeD \phi$ represents that we are confidence that $\phi$ is true even when the input data are perturbed by noise of the level~$\varepsilon$ or smaller.

\subsection{Probabilistic Robustness against Targeted Attacks}
\label{sub:target-robustness}

When a robustness attack aims at misclassifying an input as a specific target label $\ellt$, then it is called a \emph{targeted attack}.
For instance, in the above-mentioned attack by~\cite{Goodfellow:ICLR:15}, a gibbon is the target into which a panda's photo is misclassified.

In this section, we discuss how we formalize robustness using the epistemic operator $\MKeD$.
We denote by $v\in\cald$ an original input image in the test dataset $\wtest$, and by $\tiv\in\cald$ an image obtained by perturbing the original image $v$ by noise.

A first definition of robustness against targeted attacks might be:
\begin{quote}
For any $v, \tiv \in \cald$,\,
if $H(v) = {\sf panda} \mbox{ and } \utmetric(v, \tiv) \le~\varepsilon$,
then $C(v') \neq {\sf gibbon}$,
\end{quote}
which represents that when an image $\tiv$ is obtained by perturbing a panda's photo $v$ by noise, then it will not be classified as the target label {\sf gibbon} at all.
This can be formalized using \StatEL{} by:
\[
\M, \wtest \models h_{\sf panda}(x) \erightarrow 
\MKeD \PR{0} \psi_{\sf gibbon}(x)
{.}
\]
However, this notion does not accept a negligible probability of misclassification,
and does not cover the case where the human cannot recognize the perturbed image $\tiv$ as {\sf panda}
(e.g., when the perturbed image $\tiv$ is obtained by linear displacement, rescaling, and rotation~\cite{Athalye:18:ICML}, then $H(\tiv) \neq {\sf panda}$ may hold).

To overcome these issues, we introduce the following definition with some conditional probability $\delta$ of misclassification as follows.

\begin{definition}[Targeted robustness]\label{def:robust:target}\rm
Let $\delta\in[0, 1]$.
Given a dataset $\wtest$, a classifier $C$ satisfies \emph{probabilistic targeted robustness} w.r.t. an actual label $\ell$ and a target label $\ellt$
if for any input $v \in \supp(\sigma_{\wtest}(x))$ from the dataset $\wtest$, and for any perturbed input $\tiv \in \cald$ s.t. $\utmetric(v, v') \le \varepsilon$, we have:
\begin{align}\label{eq:PTR}
\Pr[\, C(\tiv) =  \ellt \mid H(\tiv) = \ell \,] \le \delta
{.}
\end{align}
\end{definition}

For instance, when 
the actual class $\ell$ is ${\sf panda}$ and 
the target label $\ellt$ is ${\sf gibbon}$, then the classifier $C$ misclassifies a panda's photo as ${\sf gibbon}$ with only a small probability~$\delta$.

Now we express this robustness notion with $I = [1-\delta, 1]$ by using $\StatEL$.

\begin{restatable}[Targeted robustness]{prop}{PTrobust}
\label{prop:PT-robust}
Let $I \subseteq [0, 1]$.
The probabilistic targeted robustness w.r.t. an actual label $\ell$ and a target label $\ellt$ under a given test dataset $\wtest$ is expressed by
$\,\wtest \models \TargetRobust_{\ell, \ellt, I}(x)$ where:
\begin{align*}
\TargetRobust_{\ell, \ellt, I}(x) \eqdef\,
\MKeD \bigl( h_{\ell}(x) \erightarrow \PR{I} \neg\, \psi_{\ellt}(x) \bigr).
\end{align*}
\end{restatable}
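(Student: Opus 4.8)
The plan is to unwind the semantics of the right-hand side formula step by step and show it matches Definition~\ref{def:robust:target}. First I would fix the test dataset $\wtest$ and expand $\wtest \models \MKeD\bigl(h_\ell(x) \erightarrow \PR{I}\neg\psi_{\ellt}(x)\bigr)$ using the interpretation of $\MKeD$: this holds iff for every world $w'$ with $(\wtest, w') \in \Rrob$ we have $w' \models h_\ell(x) \erightarrow \PR{I}\neg\psi_{\ellt}(x)$. By Definition~\ref{def:robustness:relation}, such worlds $w'$ are exactly those with $\Winfu(\sigma_{\wtest}(x), \sigma_{w'}(x)) \le \varepsilon$. The key translation step is to argue that quantifying over all such $w'$ corresponds precisely to quantifying over all pointwise perturbations $\tiv$ of inputs $v \in \supp(\sigma_{\wtest}(x))$ with $\utmetric(v,\tiv)\le\varepsilon$: given any such choice of perturbations (a ``transport map''), the pushforward distribution $\sigma_{w'}(x)$ satisfies the $\Winfu$ bound since the coupling induced by the map has all its mass on pairs within distance $\varepsilon$; conversely, the $\min$-$\max$ form of $\Winfu$ in Definition~\ref{def:p-Wasserstein-metric} guarantees an optimal coupling whose support lies within the $\varepsilon$-ball, which decomposes into such pointwise perturbations.

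Next I would expand the conditioning operator and probability quantifier inside each world $w'$. By Definition~\ref{def:modal:conditioning} (or equivalently the original semantics of $\erightarrow$), $w' \models h_\ell(x) \erightarrow \PR{I}\neg\psi_{\ellt}(x)$ holds iff the restriction $w'|_{h_\ell(x)}$ is defined and $w'|_{h_\ell(x)} \models \PR{I}\neg\psi_{\ellt}(x)$, which by the semantics of $\PR{I}$ means $\Pr[s \randassign w'|_{h_\ell(x)} : s \models \neg\psi_{\ellt}(x)] \in I$. Unwinding the definition of the restriction and the interpretations of $h_\ell$ and $\psi_{\ellt}$ from Sect.~\ref{sub:ML:predicates}, this conditional probability equals $\Pr[C(\tiv) \neq \ellt \mid H(\tiv) = \ell]$ where $\tiv$ is drawn from $\sigma_{w'}(x)$. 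Setting $I = [1-\delta,1]$, membership in $I$ is equivalent to $\Pr[C(\tiv) \neq \ellt \mid H(\tiv)=\ell] \ge 1-\delta$, i.e.\ $\Pr[C(\tiv) = \ellt \mid H(\tiv)=\ell] \le \delta$, which is exactly inequality~\eqref{eq:PTR}. I would also need to handle the edge case where the conditioning is undefined (no perturbed input has actual label $\ell$): in that case the displayed condition in Definition~\ref{def:robust:target} is vacuous and the corresponding world contributes nothing, so one should note the statement is understood with the convention that the restriction is required to be defined, or restrict attention to worlds where it is.

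The main obstacle I expect is the precise correspondence between ``all worlds $w'$ within $\Winfu$-distance $\varepsilon$'' and ``all admissible choices of perturbed inputs $\tiv$''. In one direction this is routine — a fixed perturbation map yields a coupling supported on $\varepsilon$-close pairs — but in the other direction one must be careful: an arbitrary world $w'$ close in $\Winfu$ need not arise from deterministically perturbing each input, since the optimal coupling may split the mass of a single $v$ across several $\tiv$'s. However, since every such $\tiv$ in the support of the optimal coupling still satisfies $\utmetric(v,\tiv)\le\varepsilon$, the conditional-probability bound~\eqref{eq:PTR} being required to hold for \emph{every} admissible $\tiv$ (not just one transport map) is exactly strong enough to cover these mixed worlds; conversely Definition~\ref{def:robust:target} quantifying over every individual $v$ and every individual $\tiv$ already captures the worst case. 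I would make this matching explicit and then assemble the chain of equivalences to conclude $\wtest \models \TargetRobust_{\ell,\ellt,I}(x)$ iff $C$ satisfies probabilistic targeted robustness, completing the proof.
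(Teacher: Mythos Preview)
Your proposal is correct and follows essentially the same approach as the paper: unwind the semantics of $\MKeD$, then of $\erightarrow$ and $\PR{I}$, identify the resulting conditional probability with~\eqref{eq:PTR}, and use $I=[1-\delta,1]$ to flip the negation. The paper's own proof is considerably terser and simply asserts that a world $w'$ with $(\wtest,w')\in\Rrob$ ``corresponds to the dataset obtained by perturbing each data in $w$,'' so your careful discussion of the $\Winfu$/pointwise-perturbation correspondence and the undefined-restriction edge case goes beyond what the paper spells out.
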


\begin{proof}
Let $w'$ be a possible world such that $(\wtest, w')\in\Rrob$.
Then $w'$ corresponds to the dataset obtained by perturbing each data in $w$.
Let $\tiv \in \supp(\sigma_{w'}(x))$.
Then $\tiv$ represents a perturbed input.
Let $w'' = w'|_{h_{\ell}(x)}$.
Then \eqref{eq:PTR} is logically equivalent to
$w'' \models \PR{[0, \delta]} \psi_{\ellt}(x)$.
By Definition~\ref{def:modal:conditioning}, 
$w' \models h_{\ell}(x) \erightarrow \PR{[0, \delta]} \psi_{\ellt}(x)$.
By $I = [1-\delta, 1]$,\,
$w' \models h_{\ell}(x) \erightarrow \PR{I} \neg\, \psi_{\ellt}(x)$.
Therefore this proposition follows from the semantics for $\MKeD$.
\qed
\end{proof}

Since the $L^p$-distances\footnote{\modified{The $L^p$-distance between $n$-dimensional real vectors $x$ and $x'$ is written $\| x - x' \|_p$ where the $p$-norm is defined by 
$\| v \|_p = (\sum_{i=1}^{n} |v_i|^p)^{1/p}$.}}\,%
are often regarded as reasonable approximations of human perceptual distances~\cite{Carlini17:SP}, they are used as distance constraints on the perturbation in many researches on targeted attacks (e.g.~\cite{Szegedy:14:ICLR,Goodfellow:ICLR:15,Carlini17:SP}).
Our model can represent the robustness against these attacks by using the $L^p$-distance as a metric~$\utmetric$ for $\Rrob$.

\begin{figure*}[t]
\centering
\begin{tikzpicture}
\draw [black, very thin, name path=rectangle, rotate=0] (-0.5,-0.5) rectangle +(9.5,4.5);
\fill [gray!15!white] (0.0,2.85) rectangle (8.00,3.75);
\draw [black, dotted, name path=rectangle, rotate=0] (0.0,2.85) rectangle +(8.00,0.90);
\coordinate (W1) at (0.1,3.55) node at (W1) [right] {{\scriptsize Prob. non-targeted robustness (Sect.~\ref{sub:total-robustness})}};
\coordinate (E1) at (0.1,3.15) node at (E1) [right] {
$\TargetRobust_{\ell, \ellt, I}(x) \eqdef\,
\MKeD \bigl( h_{\ell}(x) \erightarrow \PR{I} \neg\, \psi_{\ellt}(x) \bigr)$};
\fill [gray!15!white] (0.0,1.35) rectangle (8.00,2.25);
\draw [black, dotted, name path=rectangle, rotate=0] (0.0,1.35) rectangle +(8.00,0.90);
\coordinate (W2) at (0.1,2.05) node at (W2) [right] {{\scriptsize Prob. targeted robustness (Sect.~\ref{sub:target-robustness})}};
\coordinate (E2) at (0.74,1.65) node at (E2) [right] {
$\TotalRobust_{\ell, I}(x) \eqdef\,
\MKeD \bigl( h_{\ell}(x) \erightarrow \PR{I} \psi_\ell(x) \bigr)$};
\fill [gray!15!white] (0.0,-0.15) rectangle (8.00,0.75);
\draw [black, dotted, name path=rectangle, rotate=0] (0.0,-0.15) rectangle +(8.00,0.90);
\coordinate (W3) at (0.1,0.5) node at (W3) [right] {{\scriptsize Recall (Sect.~\ref{sub:ML:correctness:measures})}};
\coordinate (E3) at (0.87,0.15) node at (E3) [right] { 
$\Recall_{\ell, I}(x) \eqdef\,
h_{\ell}(x) \erightarrow \PR{I} \psi_\ell(x)$};
\coordinate (CompL1b) at (2.2,2.8) node at (CompL1b) {};
\coordinate (CompL2t) at (2.2,2.3) node at (CompL2t) {};
\draw [->,double] (CompL1b)--(CompL2t);
\coordinate (Relation2) at (3.6,2.5) node at (Relation2) {\scriptsize Proposition~\ref{prop:relation-robust} (1)};
\coordinate (CompL2b) at (2.2,1.3) node at (CompL2b) {};
\coordinate (CompL3t) at (2.2,0.8) node at (CompL3t) {};
\draw [->,double] (CompL2b)--(CompL3t);
\coordinate (Relation3) at (3.6,1.0) node at (Relation3) {\scriptsize Proposition~\ref{prop:relation-robust} (2)};
\end{tikzpicture}
\caption{
Robustness notions and their relationships.
\label{fig:compare-robustness}}
\end{figure*}
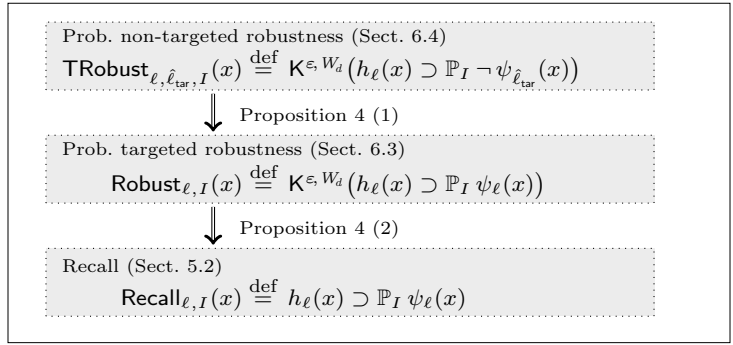

\subsection{Probabilistic Robustness against Non-Targeted Attacks}
\label{sub:total-robustness}

In this section we formalize \emph{non-targeted attacks}~\cite{Moosavi:16:CVPR,Madry:18:ICLR} in which adversaries try to misclassify inputs as some arbitrary incorrect labels (i.e., not as a specific label like a gibbon). 
Compared to targeted attacks, this kind of attacks are easier to mount, but harder to defend.

We first define the notion of robustness against non-targeted attacks as follows.

\begin{definition}[Non-targeted robustness]\label{def:robust:non-target}\rm
Let $\delta\in[0, 1]$.
Given a dataset $\wtest$, a classifier $C$ satisfies \emph{probabilistic non-targeted robustness} w.r.t. an actual label $\ell$
if for any input $v \in \supp(\sigma_{\wtest}(x))$ from the dataset $\wtest$, and for any perturbed input $\tiv \in \cald$ s.t. $\utmetric(v, v') \le \varepsilon$, we have:
\begin{align*}
\Pr[\, C(\tiv) =  \ell \mid H(\tiv) = \ell \,] >  1 - \delta
{.}
\end{align*}
\end{definition}

Now we express this robustness notion with $I = [1-\delta, 1]$ by using $\StatEL$.

\begin{restatable}[Non-targeted robustness]{prop}{PNTrobust}
\label{prop:PNT-robust}
Let $I \subseteq [0, 1]$.
The probabilistic non-targeted robustness under a test dataset $\wtest$ is expressed by
$\,\wtest \models \TotalRobust_{\ell, I}(x)$ where:
\begin{align*} 
\TotalRobust_{\ell, I}(x) 
&\eqdef\,
\MKeD \bigl( h_{\ell}(x) \erightarrow \PR{I} \psi_\ell(x) \bigr)
\\ &=
\MKeD \Recall_{\ell, I}(x)
{.}
\end{align*}
\end{restatable}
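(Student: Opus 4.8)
The plan is to follow the proof of Proposition~\ref{prop:PT-robust} essentially verbatim, substituting the correct-classification event $C(\tiv)=\ell$ for the misclassification event $C(\tiv)=\ellt$ and, accordingly, the formula $\psi_\ell(x)$ for $\neg\psi_{\ellt}(x)$. Because $\psi_\ell$ now occurs positively, the argument is in fact a little shorter: the lower bound $\Pr[\,C(\tiv)=\ell \mid H(\tiv)=\ell\,] > 1-\delta$ of Definition~\ref{def:robust:non-target} matches the interval $I=[1-\delta,1]$ (read non-strictly, or equivalently $I=(1-\delta,1]$) directly, without the complementation step that was needed in the targeted case. The second displayed equality of the statement, $\MKeD\bigl(h_\ell(x)\erightarrow\PR{I}\psi_\ell(x)\bigr)=\MKeD\,\Recall_{\ell,I}(x)$, is immediate from the definition $\Recall_{\ell,I}(x)\eqdef h_\ell(x)\erightarrow\PR{I}\psi_\ell(x)$ in Sect.~\ref{sub:ML:correctness:measures}, so the content lies entirely in the first equality.

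First I would fix an arbitrary possible world $w'$ with $(\wtest,w')\in\Rrob$ and unfold Definition~\ref{def:robustness:relation} to get $\Winfu(\sigma_{\wtest}(x),\sigma_{w'}(x))\le\varepsilon$. By Definition~\ref{def:p-Wasserstein-metric} this yields a coupling of $\sigma_{\wtest}(x)$ and $\sigma_{w'}(x)$ supported only on pairs $(v,\tiv)$ with $\utmetric(v,\tiv)\le\varepsilon$, so every $\tiv\in\supp(\sigma_{w'}(x))$ is an $\varepsilon$-perturbation of some $v\in\supp(\sigma_{\wtest}(x))$ --- i.e. $w'$ is exactly a perturbed dataset in the sense of Definition~\ref{def:robust:non-target}. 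Next I would set $w''=w'|_{h_\ell(x)}$ and use the state-level semantics of $h_\ell$ and $\psi_\ell$ from Sect.~\ref{sub:ML:predicates}, together with the definition of the restriction $w'|_{\psi}$, to rewrite $\Pr[\,s\randassign w'' : s\models\psi_\ell(x)\,]$ as the conditional probability $\Pr[\,C(\tiv)=\ell \mid H(\tiv)=\ell\,]$ for $\tiv\randassign\sigma_{w'}(x)$. The hypothesis of Definition~\ref{def:robust:non-target} then gives $w''\models\PR{I}\psi_\ell(x)$; Definition~\ref{def:modal:conditioning} lifts this to $w'\models h_\ell(x)\erightarrow\PR{I}\psi_\ell(x)$, i.e. $w'\models\Recall_{\ell,I}(x)$; and since $w'$ was an arbitrary $\Rrob$-successor of $\wtest$, the semantics of $\MKeD$ (inherited from the $\MKa$ clause of Sect.~\ref{sub:interpretation}) yields $\wtest\models\MKeD\,\Recall_{\ell,I}(x)=\TotalRobust_{\ell,I}(x)$. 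For the converse implication I would run the same chain of equivalences backwards, using that the universe $\calw=\Dists\cals$ contains, for any admissible perturbation map, the world that realizes it.

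The step I expect to need the most care is matching the two quantifier patterns: ``for all perturbed inputs $\tiv\in\cald$ with $\utmetric(v,\tiv)\le\varepsilon$'' in Definition~\ref{def:robust:non-target} versus ``for all worlds $w'$ with $(\wtest,w')\in\Rrob$'' in the formula. The $\infty$-Wasserstein ball of radius $\varepsilon$ around $\sigma_{\wtest}(x)$ is precisely the set of input distributions obtainable by relocating each sampled input within distance $\varepsilon$, so the two quantifications coincide under our standing assumption that $\calw$ is the full set of state distributions; spelling out this correspondence via couplings in both directions is the only non-mechanical part. A minor nuisance, handled exactly as in Proposition~\ref{prop:PT-robust}, is the degenerate case in which $\sigma_{w'}(x)$ gives zero mass to inputs $v$ with $H(v)=\ell$, so that $w'|_{h_\ell(x)}$ is undefined and the conditional probability is vacuous; here I would simply note that this does not arise for perturbations of a $\wtest$ that already contains such inputs, or else restrict the claim to those worlds.
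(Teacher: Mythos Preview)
Your proposal is correct and follows essentially the same route as the paper, which literally writes only ``The proof is analogous to that for Proposition~\ref{prop:PT-robust}.'' You have in fact spelled out considerably more detail than the paper does (the coupling argument for matching the two quantifier patterns, the degenerate case where $w'|_{h_\ell(x)}$ is undefined), but the skeleton---take an arbitrary $\Rrob$-successor $w'$, restrict to $w''=w'|_{h_\ell(x)}$, identify the conditional probability, apply Definition~\ref{def:modal:conditioning}, then the semantics of $\MKeD$---is exactly the paper's.
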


\begin{proof}
The proof is analogous to that for Proposition~\ref{prop:PT-robust}.
\qed
\end{proof}

\subsection{Relationships among Robustness Notions}
\label{sub:related-robustness}

In this section we present relationships among notions of robustness and performance, and discuss properties related to robustness.

We first present the following proposition immediate from the definitions.

\begin{restatable}[Relationships among notions]{prop}{RelationRobust}
\label{prop:relation-robust}
\!Let $I \subseteq [0, 1]$ and $\ell, \ellt \in\!\Label$.
Then we have:
\begin{enumerate}
\item 
\!$\wtest \models \TotalRobust_{\ell\!, I}(x)$ implies 
$\wtest \models \TargetRobust_{\ell\!, \ellt\!, I}(x)$.
\item 
\!$\wtest \models \TotalRobust_{\ell, I}(x)$ implies 
$\M, \wtest \models \Recall_{\ell, I}(x)$.
\end{enumerate}
\end{restatable}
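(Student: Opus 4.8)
The plan is to handle the two parts separately. Part~(2) is essentially a ``factivity'' (axiom~T) argument: since $\TotalRobust$ is $\MKeD$ applied to $\Recall$, it collapses to $\Recall$ at $\wtest$ once we know that $\Rrob$ is reflexive. Part~(1) needs only a short state-level observation about the classifier being a function, combined with monotonicity of the probability quantifier $\PR{I}$.

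For part~(2), I would start from the identity $\TotalRobust_{\ell, I}(x) = \MKeD \Recall_{\ell, I}(x)$ recorded in Proposition~\ref{prop:PNT-robust}. The only thing to check is that $\Rrob$ is reflexive: for any world $w$, the diagonal coupling of $\sigma_w(x)$ with itself has support on pairs $(v,v)$, so $\Winfu(\sigma_w(x), \sigma_w(x)) = 0 \le \varepsilon$ (using $\varepsilon\in\realsnng$), whence $(w,w)\in\Rrob$. Unfolding the semantics of $\MKeD$ at $\wtest$ and instantiating its universally quantified accessible world by $\wtest$ itself then yields $\wtest \models \Recall_{\ell, I}(x)$. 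I expect no real obstacle here beyond noting that the perturbation bound $\varepsilon$ is nonnegative.

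For part~(1), I would first strip the outer modality: by the semantics of $\MKeD$ it suffices to prove that for \emph{every} world $w'$,
\begin{align*}
w' \models h_{\ell}(x) \erightarrow \PR{I}\, \psi_\ell(x)
\ \text{ implies }\
w' \models h_{\ell}(x) \erightarrow \PR{I}\, \neg\,\psi_{\ellt}(x)
{,}
\end{align*}
since this then applies in particular to every $w'$ with $(\wtest, w')\in\Rrob$. Fix such a $w'$ and let $w'' = w'|_{h_{\ell}(x)}$; by Definition~\ref{def:modal:conditioning} the antecedent says $w''$ is defined with $w'' \models \PR{I}\,\psi_\ell(x)$, and the \emph{same} restriction $w''$ governs the consequent. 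The crucial point is that $C$ is a function and $\ellt \ne \ell$, so at every state $s$, $s \models \psi_\ell(x)$ forces $C(\sigma_s(x)) = \ell \ne \ellt$, i.e.\ $s \models \neg\,\psi_{\ellt}(x)$; hence $\{s : s \models \psi_\ell(x)\} \subseteq \{s : s \models \neg\,\psi_{\ellt}(x)\}$, and therefore $\Pr[s \randassign w'' : s \models \psi_\ell(x)] \le \Pr[s \randassign w'' : s \models \neg\,\psi_{\ellt}(x)]$.

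The main obstacle is closing the last gap: from membership $\Pr[\,\cdot \models \psi_\ell(x)\,] \in I$ and the inequality above, inferring $\Pr[\,\cdot \models \neg\,\psi_{\ellt}(x)\,] \in I$ requires $I$ to be upward closed in $[0,1]$. This is exactly the shape of the robustness interval $I = [1-\delta, 1]$ used in Propositions~\ref{prop:PT-robust} and~\ref{prop:PNT-robust} (and the hypothesis is genuinely needed, since the implication can fail for, say, a singleton $I$), so I would carry out this step under that assumption, making $\PR{I}$ monotone in its formula argument. This gives $w'' \models \PR{I}\,\neg\,\psi_{\ellt}(x)$, hence $w' \models h_{\ell}(x) \erightarrow \PR{I}\,\neg\,\psi_{\ellt}(x)$, and finally $\wtest \models \TargetRobust_{\ell, \ellt, I}(x)$ by the $\MKeD$ semantics.
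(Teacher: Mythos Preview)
Your argument is correct and in fact more careful than the paper's. The paper simply declares the proposition ``immediate from the definitions'' and remarks, for part~(2), that it follows from the reflexivity of $\Rrob$; this is exactly your approach for~(2).

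For part~(1) the paper offers no detail at all. Your unwinding of the modality and the state-level observation that $\psi_\ell(x)$ entails $\neg\,\psi_{\ellt}(x)$ (for $\ellt\neq\ell$, since $C$ is a function) is the natural route. You also flag a genuine subtlety the paper does not mention: passing from $\Pr[\psi_\ell(x)]\in I$ to $\Pr[\neg\,\psi_{\ellt}(x)]\in I$ via the inequality $\Pr[\psi_\ell(x)]\le\Pr[\neg\,\psi_{\ellt}(x)]$ needs $I$ to be upward closed in $[0,1]$. That hypothesis is indeed satisfied by the canonical choice $I=[1-\delta,1]$ underlying Propositions~\ref{prop:PT-robust} and~\ref{prop:PNT-robust}, so your added assumption is both necessary and harmless in context; strictly speaking, the claim as stated for an arbitrary finite union of intervals $I$ (and in multiclass settings with a third label) can fail, so your caveat sharpens the paper's statement.
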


The first claim means that probabilistic non-targeted robustness is not weaker than probabilistic targeted robustness for the same $I$.
The second claim means that probabilistic non-targeted robustness implies recall without perturbation noise.
Note that this is immediate from the reflexivity of $\Rrob$.

Next we remark that our extension of \StatEL{} can be used to describe a certain situation where adversarial attacks are mitigated.
When we apply some mechanism $T$ that preprocesses a given input to mitigate attacks on robustness, then the probabilistic targeted robustness is expressed as
$\wtest \models \MTDt \TotalRobust_{\ell, I}(x)$
where $\MTDt$ is the modality for the dataset transformation $T$.

Finally, we recall that by Proposition~\ref{prop:PNT-robust}, robustness can be regarded as recall in the presence of perturbed noise.
This implies that for each property $\phi$ in the table of confusion (Table~\ref{table:confusion}), we could consider $\MKeD \phi$ as a property to evaluate the classification performance in the presence of adversarial inputs
although this has not been formalized in the literature of robustness of machine learning as far as we recognize.
For example, \emph{precision robustness} $\MKeD \Precision_{\ell,i}(x)$ represents that in the presence of perturbed noise, the prediction is correct with a probability $i$ given that it is positive.
For another example, \emph{accuracy robustness} $\MKeD \Accuracy_{\ell,i}(x)$ represents that in the presence of perturbed noise, the prediction is correct (whether it is positive or negative) with a probability $i$.

\begin{example}[Robustness of pedestrian detection]\label{eg:human:robust}
\modified{We illustrate robustness notions using the pedestrian detection in Example~\ref{eg:human:performance} in Section~\ref{sub:ML:correctness:measures}.
We deal with a binary classifier $C$ that detects whether a pedestrian is crossing the road in a photo image $x$.}

\modified{The non-targeted robustness $\MKeD \Recall_{\ell, 0.9}(x)$ represents that in the presence of perturbed noise to the input image $x$, with probability $0.9$ the classifier $C$ can detect a person crossing the road when the human can actually recognize.
This robustness is crucial for an autonomous car not to hit a pedestrian.}

\modified{The precision robustness $\MKeD \Precision_{\ell,0.9}(x)$ represents that in the presence of perturbed noise to $x$, with probability $0.9$ the human can actually recognize a person crossing the road when the classifier $C$ detects it.
This type of robustness is important for an autonomous car to avoid stopping suddenly due to a false alarm (not take the crash from the car behind).}
\end{example}

\section{Formalizing the Fairness of Classifiers}
\label{sec:ML:fairness}
Many studies have proposed and investigated various notions of fairness in machine learning~\cite{Barocas:19:book}.
Informally, these fairness notions mean that the results of machine learning tasks are irrelevant of some sensitive attributes, e.g., gender, age, race, disease, political/religious view.
In a recently few years, there have been studies on the testing methods for fairness of machine learning~\cite{Galhotra:17:ESECFSE,Angell:18:ESECFSE,Udeshi:18:ASE}.

In this section, we formalize popular notions of fairness of supervised learning by using our extension of \StatEL{}.
Here we focus on the fairness that should be maintained in the \emph{impact} (i.e., the results of machine learning tasks) rather than the \emph{treatment} (i.e., the process of machine learning tasks).
This is because previous research show that many seemingly neutral features have statistical relationships with sensitive attributes, and hence just ignoring or removing sensitive attributes in the process of data preparation and training\footnote{Such \emph{unawareness} requires that sensitive attributes are not explicitly used in the learning process. However, \StatEL{} may not be suited to formalizing this requirement.} is often ineffective or harmful to achieve the fairness and performance of learning tasks.

\subsection{Basic Ideas and Notations}
\label{sec:basic:fairness}

Various notions of fairness in supervised learning are classified into three categories: \emph{independence}, \emph{separation}, and \emph{sufficiency}~\cite{Barocas:19:book}.
All of these have the form of (conditional) independence or its relaxation,
and thus can be formalized using the modal operator $\IND{x}{\varepsilon,D}$ for conditional indistinguishability (defined in Sect.~\ref{sub:modal:indistinguishable}) in our extension of \StatEL{}.\footnote{Compared to the preliminary version~\cite{Kawamoto:19:SEFM} of this paper, we corrected errors and changed the formalization into a more comprehensible form by introducing the operator $\IND{x}{\varepsilon,D}$ and by removing the \emph{counter factual epistemic operators} and a formula $\xi_d$ representing that the input is drawn from a dataset $d$.}

In the formalization of fairness notions, we use a distributional Kripke model $\M =(\calw, (\calr_a)_{a\in\cala}, \allowbreak (V_s)_{s\in\cals})$.
Recall that $x$, $y$, and $\hy$ are measurement variables respectively denoting the input datum, the actual class label (given by the oracle $H$), and the predicted label (output by the classifier $C$).
Given a real world $\wtest$ (corresponding to a given test dataset), 
$\sigma_{\wtest}(x)$ is the probability distribution of $C$'s test input over $\cald$, 
$\sigma_{\wtest}(y)$ is the distribution of the actual label over $\Label$,
and $\sigma_{\wtest}(\hy)$ is the distribution of $C$'s output over $\Label$.

Fairness notions are usually defined in terms of some \emph{sensitive attribute} (e.g., gender, age, race, disease, political/religious view), which is defined as a tuple of subsets of the input data domain $\cald$.
For example, a sensitive attribute based on ages can be defined as a pair of groups $G_0$ (input data with ages 21 to 60) and $G_1$ (ages 61 to 100).
For each group $G\subseteq\cald$ of inputs, we introduce a static formula $\eta_{G}(x)$ representing that an input $x$ belongs to~$G$.
Formally, this is interpreted~by:
\begin{align*}
\mbox{For each state $s\in\cals$, }~ 
s\models \eta_{G}(x) ~\mbox{ iff }~ \sigma_{s}(x) \in G.
\end{align*}

Roughly speaking, a machine learning task is said to be fair if the performance of the task for a group $G_0$'s input is similar to that for another group $G_1$'s input.%
\footnote{Some fairness notions (e.g., equal opportunity) assume $G_1 = \cald \setminus G_0$.}
In the following sections, we formalize the three categories of fairness of classifiers and their relaxation.
\modified{A summary of this formalization is presented in Table~\ref{table:various:fairness}.}

\begin{table*}[t]
\caption{Popular notions of fairness of machine learning \label{table:various:fairness}}
\centering
\begin{tabular}{|l|l|} \hline
&\\[-1.9ex]
\hspace{-1ex}Sect.\hspace{-1.3ex}\!& Formalization of fairness notions \\[0.1ex] \hline
&\\[-1.2ex]
\!\ref{sub:fair:independence}\! &
\!Independence (a.k.a. group fairness) \\[0.1ex]
&
\!$\GrpFair_{\varepsilon}(x, \hy) \eqdef
\bigl( \eta_{G_0}(x) \wedge \psi(x, \hy) \bigr)
\IND{\hy}{\varepsilon,\Dtv}
\bigl( \eta_{G_1}(x) \wedge \psi(x, \hy) \bigr)$
\\[1.0ex] \hline
&\\[-1.2ex]
\!\ref{sub:fair:seperation}\! &
\!Separation (a.k.a. equalized odds) \\[0.1ex]
&
\!$\EqOdds_{\varepsilon}(x, \hy) \eqdef
{\displaystyle \bigwedge_{\ell\in\Label}}
\Bigl(
\bigl( \eta_{G_0}(x) \wedge \psi(x, \hy) \wedge h_{\ell}(x) \bigr)
\IND{\hy}{\varepsilon,\Dtv}
\bigl( \eta_{G_1}(x) \wedge \psi(x, \hy) \wedge h_{\ell}(x) \bigr)
\Bigr)$\hspace{-1ex}\!%
\\[1.0ex] \hline
&\\[-1.2ex]
\!\ref{sub:fair:seperation}\! &
\!Equal opportunity (a relaxation of separation) \\[0.1ex]
&
\!$\EqOpp(x, \hy) \eqdef
\bigl( \eta_{G_0}(x) \wedge \psi(x, \hy) \wedge h_{\ell}(x) \bigr)
\IND{\hy}{0,\Dtv}
\bigl( \neg\eta_{G_0}(x) \wedge \psi(x, \hy) \wedge h_{\ell}(x) \bigr)$
\\[1.0ex] \hline
&\\[-1.2ex]
\!\ref{sub:fair:sufficiency}\! &
\!Sufficiency (a.k.a. conditional use accuracy equality) \\[0.1ex]
&
\!$\Suffice_{\varepsilon}(x, y) \eqdef
{\displaystyle \bigwedge_{\hell\in\Label}}
\Bigl(\!
\bigl( \eta_{G_0}(x) \wedge \psi_{\hell}(x) \wedge h(x, y) \bigr)
{\,\IND{y}{\varepsilon,\Dtv}}
\bigl( \eta_{G_1}(x) \wedge \psi_{\hell}(x) \wedge h(x, y) \bigr)
\!\Bigr)$\hspace{-1ex}\!%
\\[1.0ex] \hline
  \end{tabular}
\end{table*}

\subsection{Independence  (a.k.a. Group Fairness, Statistical Parity) and its Relaxation}
\label{sub:fair:independence}

In this section we explain and formalize the notion of \emph{independence}~\cite{Calders:10:DMKD},
which is also known as \emph{group fairness}~\cite{Dwork:12:ITCS}%
\,\footnote{
In previous literature, independence has been referred to also as different terminologies, such as \emph{statistical parity}, \emph{demographic parity}, and \emph{disparate impact}.}, 
and its relaxed notion.
Intuitively, independence means that the predicted label $\hy$ does not have statistical relationships with the membership in a sensitive group.
For example, independence does not allow a bank's lending rate to be correlated with a sensitive attribute such as gender.

We first present the definition of a relaxed notion of independence, called \emph{group fairness up to bias $\varepsilon$}~\cite{Dwork:12:ITCS} as follows.
Intuitively, this is the property that the output distributions of the classifier are roughly identical when input data belong to different groups.

Formally, this fairness notion is defined as follows.
\begin{definition}[Independence, group fairness]\label{def:independence} \rm
Let $G_0, G_1 \subseteq \cald$ be sets of input data constituting a sensitive attribute.
For each $b = 0, 1$, let 
$\mu_{G_b}\in\Dists\Label$ be the probability distribution of the predicted label $\hell$ output by a classifier $C$ when an input $v$ is sampled from a test dataset $\wtest$ and belongs to $G_b$; i.e., for each $\hell\in\Label$,
\begin{align}\label{eq:group-fairness}
\mu_{G_b}[\hell\,]
&\eqdef
\Pr[\, C(v) = \hell \,\,|\, v \randassign \sigma_{\wtest}(x) \mbox{ and } v \in G_b  \,]
{.}
\end{align}
Then a classifier $C$ satisfies the \emph{group fairness between groups $G_0$ and $G_1$ up to bias} $\varepsilon$ 
if $\Dtv( \mu_{G_0} \| \mu_{G_1} ) \leq \varepsilon$,
where $\Dtv$ is the total variation between distributions (defined in 
Sect.~\ref{sub:stat-distance}).
A classifier $C$ satisfies \emph{independence} w.r.t. groups $G_0$ and $G_1$ if it satisfies the group fairness between $G_0$ and $G_1$ up to bias~$0$.
\end{definition}

Now we express this fairness notion using our extension of \StatEL{} as follows.

\begin{restatable}[Independence,\! group fairness]{prop}{Independence}
\label{prop:independence}
\!The group fairness between groups $G_0$ and $G_1$ up to bias $\varepsilon$ 
under a given test dataset $\wtest$
is expressed as 
$\wtest \models \GrpFair_{\varepsilon}(x, \hy)$ where:
\[
\GrpFair_{\!\varepsilon}(x, \hy)\!\eqdef\!
\bigl( \eta_{G_{\!0}}\!(x) \wedge \psi(x, \hy) \bigr)
\IND{\hy}{\varepsilon,\Dtv}
\!\bigl( \eta_{G_{\!1}}\!(x) \wedge \psi(x, \hy) \bigr)
{.}
\]
Independence (without bias $\varepsilon$) is expressed by
$\wtest \models \GrpFair_{0}(x, \hy)$.
\end{restatable}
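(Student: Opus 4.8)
The plan is to reduce the statement to the characterization of the conditional indistinguishability operator already established in Lemma~\ref{lem:IND-divergence}, and then to check that the two conditioned label-distributions it produces coincide with the distributions $\mu_{G_0},\mu_{G_1}$ of Definition~\ref{def:independence}.

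First I would instantiate Lemma~\ref{lem:IND-divergence} with the measurement variable $\hy$, the divergence $D = \Dtv$, and the static formulas $\psi_0 \eqdef \eta_{G_0}(x) \wedge \psi(x,\hy)$ and $\psi_1 \eqdef \eta_{G_1}(x) \wedge \psi(x,\hy)$. This immediately gives that $\wtest \models \GrpFair_{\varepsilon}(x,\hy)$ holds iff $\tvdiverge{\sigma_{\wtest|_{\psi_0}}(\hy)}{\sigma_{\wtest|_{\psi_1}}(\hy)} \le \varepsilon$. Here I would also record the standing non-degeneracy assumption that makes the restrictions well defined, namely that $\wtest$ contains, with positive probability, states whose input lies in $G_0$ (resp. $G_1$), i.e.\ $\sigma_{\wtest}(x)[G_0] > 0$ and $\sigma_{\wtest}(x)[G_1] > 0$; otherwise one of $\wtest|_{\psi_b}$ is undefined and both sides are false.

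Next I would unfold the definitions of the restriction $w|_{\psi}$ (Sect.~\ref{sub:interpretation}) and of the stochastic assignment $\sigma_w$ (Sect.~\ref{sub:Kripke}) to compute, for each $\hell \in \Label$,
\[
\sigma_{\wtest|_{\psi_b}}(\hy)[\hell]
= \Pr\!\bigl[\, s \randassign \wtest \,:\, \sigma_s(\hy) = \hell \;\big|\; s \models \eta_{G_b}(x) \wedge \psi(x,\hy) \,\bigr].
\]
In our model of classifiers (Sect.~\ref{sub:ML:predicates}) the variable $\hy$ denotes the classifier's output, so $\psi(x,\hy)$ — i.e.\ $\sigma_s(\hy) = C(\sigma_s(x))$ — holds at every state of $\wtest$, and $s \models \eta_{G_b}(x)$ iff $\sigma_s(x) \in G_b$; hence conditioning on $\psi(x,\hy)$ is vacuous and the right-hand side equals $\Pr[\, C(v) = \hell \mid v \randassign \sigma_{\wtest}(x),\ v \in G_b \,]$, which is exactly $\mu_{G_b}[\hell]$ by~\eqref{eq:group-fairness}. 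Thus $\sigma_{\wtest|_{\psi_b}}(\hy) = \mu_{G_b}$ for $b = 0,1$, so $\tvdiverge{\sigma_{\wtest|_{\psi_0}}(\hy)}{\sigma_{\wtest|_{\psi_1}}(\hy)} = \tvdiverge{\mu_{G_0}}{\mu_{G_1}}$, and combining with the first step gives $\wtest \models \GrpFair_{\varepsilon}(x,\hy)$ iff $\tvdiverge{\mu_{G_0}}{\mu_{G_1}} \le \varepsilon$, which is Definition~\ref{def:independence}. Setting $\varepsilon = 0$ and invoking Proposition~\ref{prop:conditional-identity} (or Lemma~\ref{lem:IND-divergence} together with the fact that $\Dtv$ is a metric) yields the final sentence about independence.

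I expect the main obstacle to be the middle step rather than any inequality: carefully doing the bookkeeping that identifies $\sigma_{\wtest|_{\psi_b}}(\hy)$ with $\mu_{G_b}$. In particular this means being explicit about which variable ($x$ versus $\hy$) each restriction and each marginal refers to, justifying that restricting to $\psi(x,\hy)$ does not change the distribution because that formula holds everywhere in the model, and handling the degenerate case where $\sigma_{\wtest}(x)$ assigns probability zero to $G_0$ or $G_1$. Everything else is a direct unfolding of the definitions and an appeal to Lemma~\ref{lem:IND-divergence}.
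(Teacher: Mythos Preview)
Your proposal is correct and follows essentially the same route as the paper: identify the restricted worlds $w_b = \wtest|_{\eta_{G_b}(x)\wedge\psi(x,\hy)}$, check that $\sigma_{w_b}(\hy) = \mu_{G_b}$ via~\eqref{eq:group-fairness}, and then invoke Lemma~\ref{lem:IND-divergence}. You supply more bookkeeping (the vacuity of conditioning on $\psi(x,\hy)$, the non-degeneracy caveat, and the $\varepsilon=0$ case via Proposition~\ref{prop:conditional-identity}) than the paper does, but the argument is the same.
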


\begin{proof}
Let $w_b = \wtest|_{\eta_{G_b}(x) \wedge \psi(x, \hy)}$.
It follows from \eqref{eq:group-fairness} that for each $\hell\in\Label$,
$
\mu_{G_b}[\hell\,] = \Pr[\, \sigma_{s}(\hy) = \hell \,\mid\, s \randassign w_b \,],
$
hence $\mu_{G_b} = \sigma_{w_b}(\hy)$.
Thus, by Definition~\ref{def:independence}, the group fairness between groups $G_0$ and $G_1$ up to bias $\varepsilon$ is given by 
$\Dtv( \sigma_{w_0}(\hy) \| \sigma_{w_1}(\hy) ) \leq \varepsilon$.
Therefore, this proposition follows from Lemma~\ref{lem:IND-divergence}.
\qed
\end{proof}

\begin{example}[Independence in pedestrian detection]\label{eg:human:Ind}
\modified{We illustrate independence using the pedestrian detection in Example~\ref{eg:human:performance} in Section~\ref{sub:ML:correctness:measures}.
We deal with a binary classifier $C$ that detects whether or not a pedestrian is crossing the road in an image $x$.
We write $\man(x)$ (resp. $\woman(x)$) to represent that an image $x$ includes a man (resp. woman) that may or not be crossing the road.
Let $\psi(x, \hy)$ 
represent that given an input image $x$, the classifier $C$ 
returns $\hy$ (that is either the detection of a person crossing the road or not).}

\modified{Then the independence between men and women
$
\GrpFair_{\!0}(x, \hy) {\eqdef}
\bigl( \man(x) \wedge \psi(x, \hy) \bigr)
\IND{\hy}{0,\Dtv}
\bigl( \woman(x) \wedge \psi(x, \hy) \bigr)
$
means that the probability of detecting a pedestrian crossing the road is the same between men and women.
This fairness guarantees that men and women are equally detectable as pedestrians, hence equally safe against an autonomous car.
Here independence does not rely on the actual label $y$, i.e., on whether there is a pedestrian crossing the road that can be detected by human eyes.}
\end{example}

\subsection{Separation (a.k.a. Equalized Odds) and its Relaxation (Equal Opportunity)}
\label{sub:fair:seperation}

In this section we explain and formalize the notion of \emph{separation}~\cite{Barocas:19:book}%
\,\footnote{In previous literature, separation has been referred to also as 
\emph{disparate mistreatment}~\cite{Zafar:17:WWW} and \emph{conditional procedure accuracy equality}~\cite{Berk:18:SMR}.}, 
which is well-known as \emph{equalized odds}~\cite{Hardt:16:NIPS}, 
and its relaxed notion called \emph{equal opportunity}~\cite{Hardt:16:NIPS}.
The motivation behind these notions is to capture typical scenarios in which sensitive characteristics may have statistical relationships with the actual class label.
For instance, even when some sensitive attribute is correlated with an actual default rate on loans, banks might want to have a different lending rate for people who have a higher default rate.
However, independence (group fairness) does not allow this, since it requires that the lending rate should be statistically independent of the sensitive attribute.

To overcome this problem, the notion of separation allows statistical relationships between a sensitive attribute and the predicted label $\hy$ output by the classifier $C$ to the extent that this is justified by the actual class label $y$.
More precisely, separation means that the predicted label $\hy$ is conditionally independent of the membership in a sensitive group, given an actual class label $y$.

Formally, separation 
is defined as a property that recall (true positive rate) and specificity (true negative rate, explained in Table~\ref{table:confusion}) are the same for all the groups,
and equal opportunity is defined as a special case of separation only for an advantageous class label.
\begin{definition}[Separation \& equal opportunity]\label{def:separation} \rm
Given a group $G_b \subseteq \cald$ and an actual class label $\ell$, let $\mu_{G_b,\ell}\in\Dists\Label$ be the probability distribution of the predicted label $\hell$ output by a classifier $C$ when an input $v\in G_b$ is sampled from a test dataset $\wtest$ and is associated with an actual label $\ell$; i.e., for each $\hell\in\Label$,
\begin{align}\label{eq:equalized-odds}
\mu_{G_b,\ell}[\hell\,]
&\!\eqdef
\Pr[\, C(v) = \hell \,|\, 
v {\randassign} \sigma_{\wtest}(x),  v \in G_b, H(v)\!= \ell  \,\,]
{.}
\end{align}
A classifier $C$ satisfies \emph{separation} between two groups $G_0$ and $G_1$ 
if $\mu_{G_0,\ell} = \mu_{G_1,\ell}$ holds for all $\ell\in\Label$.
A classifier $C$ satisfies \emph{equal opportunity} of an advantageous label $\ell$ w.r.t. a group $G_0$
if $\mu_{G_0,\ell} = \mu_{G_1,\ell}$ where $G_1 = \cald \setminus G_0$.
\end{definition}

Now we express these two notions using our extension of \StatEL{} as follows.

\begin{restatable}[Separation]{prop}{Separation}
\label{prop:separation}
Let $\gamma(x, \ell, \hy) \eqdef \psi(x, \hy) \allowbreak \wedge h_{\ell}(x)$.
The separation between two groups $G_0$ and $G_1$ 
under a given test dataset $\wtest$
is expressed as 
$\wtest \models \EqOdds_{0}(x, \hy)$ where: 
\begin{align*}
&
\EqOdds_{\varepsilon}(x, \hy) \eqdef
\\[-0.5ex]&
\bigwedge_{\ell\in\Label}
\Bigl(
\bigl( \eta_{G_0}(x) \wedge \gamma(x, \ell, \hy) \bigr)
\IND{\hy}{\varepsilon,\Dtv}
\bigl( \eta_{G_1}(x) \wedge \gamma(x, \ell, \hy) \bigr)
\Bigr)
{.}
\end{align*}
\end{restatable}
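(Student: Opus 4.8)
The plan is to follow the proof of Proposition~\ref{prop:independence} almost verbatim, the only new ingredients being the extra conjunct $h_\ell(x)$ inside the conditioning and the outer conjunction over $\ell\in\Label$. First I would fix $\ell\in\Label$ and, for $b=0,1$, put $w_{b,\ell}\eqdef\wtest|_{\eta_{G_b}(x)\wedge\gamma(x,\ell,\hy)}$, the restriction of the test dataset to those data tuples that lie in $G_b$, are assigned label $\ell$ by the oracle, and whose recorded prediction agrees with $C$. Unfolding the state-level semantics of $\eta_{G_b}$, $\psi$, and $h_\ell$ from Sect.~\ref{sub:interpretation} together with the definition of $w|_\psi$, a state $s$ has positive probability in $w_{b,\ell}$ exactly when $\sigma_s(x)\in G_b$, $H(\sigma_s(x))=\ell$, and $\sigma_s(\hy)=C(\sigma_s(x))$, and then $w_{b,\ell}[s]$ is the corresponding conditional probability under $\wtest$. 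Comparing with~(\ref{eq:equalized-odds}) gives $\mu_{G_b,\ell}[\hell]=\Pr[\,\sigma_s(\hy)=\hell \mid s\randassign w_{b,\ell}\,]$ for every $\hell\in\Label$, i.e., $\mu_{G_b,\ell}=\sigma_{w_{b,\ell}}(\hy)$, exactly as $\mu_{G_b}=\sigma_{w_b}(\hy)$ was obtained in Proposition~\ref{prop:independence}.

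Next I would apply Lemma~\ref{lem:IND-divergence} to the $\ell$-th conjunct: $\wtest\models\bigl(\eta_{G_0}(x)\wedge\gamma(x,\ell,\hy)\bigr)\IND{\hy}{\varepsilon,\Dtv}\bigl(\eta_{G_1}(x)\wedge\gamma(x,\ell,\hy)\bigr)$ holds iff $\tvdiverge{\sigma_{w_{0,\ell}}(\hy)}{\sigma_{w_{1,\ell}}(\hy)}\le\varepsilon$, that is, iff $\tvdiverge{\mu_{G_0,\ell}}{\mu_{G_1,\ell}}\le\varepsilon$. Specialising to $\varepsilon=0$ and using that total variation vanishes precisely on identical distributions (the content of Proposition~\ref{prop:conditional-identity}), this conjunct is equivalent to $\mu_{G_0,\ell}=\mu_{G_1,\ell}$. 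Since $\EqOdds_0(x,\hy)$ is the conjunction over all $\ell\in\Label$, we get $\wtest\models\EqOdds_0(x,\hy)$ iff $\mu_{G_0,\ell}=\mu_{G_1,\ell}$ for every $\ell$, which by Definition~\ref{def:separation} is exactly separation between $G_0$ and $G_1$; the equal-opportunity case then follows by taking $G_1=\cald\setminus G_0$ and a single advantageous label.

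The step I expect to require the most care is not any individual calculation but the bookkeeping around the nested conditioning. One must assume $\wtest$ is non-degenerate enough that each event $\eta_{G_b}(x)\wedge\gamma(x,\ell,\hy)$ has positive probability, so that $w_{b,\ell}$ is defined and the conditioning relation of Definition~\ref{def:modal:conditioning} genuinely links $\wtest$ to it; and one must check that the existential witnesses hidden in the semantics of $\IND{\hy}{\varepsilon,\Dtv}$ (Sect.~\ref{sub:modal:indistinguishable}) are forced to be exactly $w_{0,\ell}$ and $w_{1,\ell}$, which holds because $w|_\psi$ is uniquely determined by $w$ and $\psi$. Once that is pinned down, the interaction of the outer conjunction over $\ell$ with the per-conjunct equivalence, and hence the full statement, is immediate from the definitions.
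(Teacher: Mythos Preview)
Your proposal is correct and follows essentially the same route as the paper's own proof: define $w_{b,\ell}=\wtest|_{\eta_{G_b}(x)\wedge\psi(x,\hy)\wedge h_\ell(x)}$, identify $\mu_{G_b,\ell}$ with $\sigma_{w_{b,\ell}}(\hy)$ via~\eqref{eq:equalized-odds}, and conclude from the characterization of $\IND{\hy}{0,\Dtv}$. The paper invokes Proposition~\ref{prop:conditional-identity} directly rather than passing through Lemma~\ref{lem:IND-divergence} first, and does not spell out the well-definedness and uniqueness-of-witness points you raise, but the argument is otherwise identical.
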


\begin{proof}
Let $\ell\in\Label$ and $w_{b,\ell} = \wtest|_{\eta_{G_b}(x) \wedge \psi(x, \hy) \wedge h_{\ell}(x)}$.
It follows from \eqref{eq:equalized-odds} that:
\begin{align*}
\mu_{G_b,\ell}[\hell\,] = \Pr[\, \sigma_{s}(\hy) = \hell \,\mid\, s \randassign w_{b,\ell} \,],
\end{align*}
hence $\mu_{G_b,\ell} = \sigma_{w_{b,\ell}}(\hy)$.
Thus, by Definition~\ref{def:separation}, the separation between $G_0$ and $G_1$ is given by $\sigma_{w_{0,\ell}}(\hy) = \sigma_{w_{1,\ell}}(\hy)$ for all $\ell\in\Label$.
Therefore, this proposition
follows from Proposition~\ref{prop:conditional-identity}.
\qed
\end{proof}

It should be noted that for $\varepsilon > 0$,\, $\EqOdds_{\varepsilon}(x, \hy)$ represents a relaxation of separation up to bias $\varepsilon$ in terms of total variation $\Dtv$.

\begin{restatable}[Equal opportunity]{prop}{EqualOpportunity}
\label{prop:equal-opportunity}
\!Let $\gamma(x, \ell, \hy) {\eqdef} \allowbreak \psi(x, \hy) \wedge h_{\ell}(x)$.
The equal opportunity of a label $\ell$ w.r.t. a group $G_0$ 
under a given test dataset $\wtest$
is expressed as 
$\wtest \models \EqOpp(x, \hy)$ where:
\begin{align*}
& \EqOpp(x, \hy) \eqdef
\\[-0.5ex]&
\bigl( \eta_{G_0}(x) \wedge \gamma(x, \ell, \hy) \bigr)
\IND{\hy}{0,\Dtv}
\bigl( \neg\eta_{G_0}(x) \wedge \gamma(x, \ell, \hy) \bigr)
{.}
\end{align*}
\end{restatable}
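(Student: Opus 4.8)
The plan is to mirror the proof of Proposition~\ref{prop:separation} almost verbatim, using the fact that equal opportunity is exactly the instance of separation in which $G_1 = \cald \setminus G_0$ and only the single advantageous label $\ell$ is considered (so the conjunction $\bigwedge_{\ell\in\Label}$ collapses to one conjunct). First I would unfold Definition~\ref{def:separation}: equal opportunity of $\ell$ w.r.t.\ $G_0$ holds iff $\mu_{G_0,\ell} = \mu_{G_1,\ell}$ with $G_1 = \cald \setminus G_0$. The only genuinely new point relative to separation is the translation of membership in the complement group into the logical negation of $\eta_{G_0}$: from the interpretation $s \models \eta_{G}(x)$ iff $\sigma_s(x)\in G$, one gets $s \models \eta_{\cald\setminus G_0}(x)$ iff $\sigma_s(x)\notin G_0$ iff $s \models \neg\eta_{G_0}(x)$, so the two static formulas induce the same restriction of $\wtest$. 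This is the step I would flag as the main (indeed the only) obstacle, although it is a one-line argument.

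Next I would set $w_{0,\ell} = \wtest|_{\eta_{G_0}(x) \wedge \psi(x, \hy) \wedge h_{\ell}(x)}$ and $w_{1,\ell} = \wtest|_{\neg\eta_{G_0}(x) \wedge \psi(x, \hy) \wedge h_{\ell}(x)}$, and reuse the computation in the proof of Proposition~\ref{prop:separation}, together with the equivalence above applied to $w_{1,\ell}$, to obtain $\mu_{G_0,\ell} = \sigma_{w_{0,\ell}}(\hy)$ and $\mu_{G_1,\ell} = \sigma_{w_{1,\ell}}(\hy)$. Hence equal opportunity of $\ell$ w.r.t.\ $G_0$ is equivalent to $\sigma_{w_{0,\ell}}(\hy) = \sigma_{w_{1,\ell}}(\hy)$.

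Finally I would invoke Proposition~\ref{prop:conditional-identity} with the static formulas $\psi_0 = \eta_{G_0}(x) \wedge \gamma(x, \ell, \hy)$ and $\psi_1 = \neg\eta_{G_0}(x) \wedge \gamma(x, \ell, \hy)$ and the measurement variable $\hy$: since $\wtest|_{\psi_0}(\hy) = \sigma_{w_{0,\ell}}(\hy)$ and $\wtest|_{\psi_1}(\hy) = \sigma_{w_{1,\ell}}(\hy)$ coincide, that proposition yields $\wtest \models \psi_0 \IND{\hy}{0,\Dtv} \psi_1$, which is precisely $\wtest \models \EqOpp(x, \hy)$. No total-variation estimate is required here because the bias is $0$; were one to state the relaxed version up to bias $\varepsilon$, the last step would instead use Lemma~\ref{lem:IND-divergence} in place of Proposition~\ref{prop:conditional-identity}, exactly as in the remark following Proposition~\ref{prop:separation}.
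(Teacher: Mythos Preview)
Your proposal is correct and follows essentially the same route as the paper: set $G_1 = \cald \setminus G_0$, reuse the identification $\mu_{G_b,\ell} = \sigma_{w_{b,\ell}}(\hy)$ from the proof of Proposition~\ref{prop:separation}, and then apply Proposition~\ref{prop:conditional-identity}. The only extra detail you spell out is the equivalence $s \models \eta_{\cald\setminus G_0}(x)$ iff $s \models \neg\eta_{G_0}(x)$, which the paper leaves implicit.
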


\begin{proof}
The proof of this proposition is similar to that of Proposition~\ref{prop:separation}.
Let $G_1 = \cald \setminus G_0$.
By $\mu_{G_b,\ell} = \sigma_{w_{b,\ell}}(\hy)$,
the equal opportunity of $\ell$ w.r.t. $G_0$ is given by $\sigma_{w_{0,\ell}}(\hy) = \sigma_{w_{1,\ell}}(\hy)$.
Therefore, this proposition 
follows from Proposition~\ref{prop:conditional-identity}.
\qed
\end{proof}

\begin{example}[Separation in pedestrian detection]\label{eg:human:speparate}
\modified{We illustrate separation using the pedestrian detection in Example~\ref{eg:human:Ind}
where a binary classifier $C$ detects whether a pedestrian is crossing the road in an image $x$.
Let $\psi(x, \hy)$ (resp. $h(x, y)$) 
represent that given an image $x$,\, the classifier $C$ (resp. human) 
returns $\hy$ (resp.~$y$) representing either detection or not.}

\modified{The level of the inherent technical difficulty of detecting a female pedestrian may be different from that of a male pedestrian, because, for example, the physical appearance may tend to be different between women and men.
If we take this possible difference into account, separation can be suited instead of independence.}

\modified{The separation $\EqOdds_{0}(x, \hy)$ between men and women guarantees that the conditional probability of detecting a pedestrian crossing the road when the human can actually recognize it, is the same between men and women.
This fairness implies that (from the viewpoint of a pedestrian crossing the road) male and female pedestrians may be hit by an autonomous car as fairly as by the human-driven car.}
\end{example}

\subsection{Sufficiency (a.k.a. Conditional Use Accuracy Equality)}
\label{sub:fair:sufficiency}

In this section we explain and formalize the notion of \emph{sufficiency}~\cite{Barocas:19:book}, which is also known as \emph{conditional use accuracy equality}~\cite{Berk:18:SMR}.

While separation guarantees the equality of recall among different groups, sufficiency requires the equality of precision.
More precisely, sufficiency is defined as the property that precision (positive predictive value) and negative predictive value (presented as NPV in Table~\ref{table:confusion}) are the same for all the groups as follows.
\begin{definition}[Sufficiency]\label{def:sufficiency} \rm
Given a group $G_b \subseteq \cald$ and a predicted label $\hell$, let $\mu_{G_b,\hell}\in\Dists\Label$ be the probability distribution of the actual class label $\ell$ when an input $v\in G_b$ is sampled from a test dataset $\wtest$ and the classifier $C$ outputs the predicted label $\hell$; 
i.e., for each $\ell\in\Label$,
\begin{align}\label{eq:sufficiency}
\mu_{G_b,\hell}[\ell\,]
&\!\eqdef
\Pr[\, H(v) = \ell \,|\, v {\randassign} \sigma_{\wtest}(x),  v \in G_b,
C(v) \!= \hell  \,\,]
{.}
\end{align}
A classifier $C$ satisfies \emph{sufficiency} between two groups $G_0$ and $G_1$ 
if $\mu_{G_0,\hell} = \mu_{G_1,\hell}$ holds for all $\hell\in\Label$.
\end{definition}

Then this notion can be expressed using our extension of \StatEL{} as follows.

\begin{restatable}[Sufficiency]{prop}{Sufficiency}
\label{prop:sufficiency}
Let $\gamma'(x,y,\hell) \eqdef \psi_{\hell}(x) \allowbreak \wedge h(x, y)$.
The sufficiency between two groups $G_0$ and $G_1$ 
under a given test dataset $\wtest$
is expressed as 
$\wtest \models \Suffice_{0}(x, y)$
where:
\begin{align*}
& \Suffice_{\varepsilon}(x, y) \eqdef
\\[-0.5ex] &
\bigwedge_{\hell\in\Label}
\Bigl(
\bigl( \eta_{G_0}(x) \wedge \gamma'(x,y,\hell) \bigr)
\IND{y}{\varepsilon,\Dtv}
\bigl( \eta_{G_1}(x) \wedge \gamma'(x,y,\hell) \bigr)
\Bigr)
{.}
\end{align*}
\end{restatable}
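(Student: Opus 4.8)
The plan is to mirror exactly the proof of Proposition~\ref{prop:separation} (the separation case), since sufficiency is the dual notion obtained by swapping the roles of the predicted label $\hy$ and the actual label $y$, and swapping the conditioning formulas $\psi$ and $h$ accordingly. The key structural fact we need is that each distribution $\mu_{G_b,\hell}$ appearing in Definition~\ref{def:sufficiency} coincides with the stochastic assignment of the variable $y$ at a suitably restricted world.

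First I would fix a predicted label $\hell\in\Label$ and introduce the restricted world $w_{b,\hell} \eqdef \wtest|_{\eta_{G_b}(x) \wedge \psi_{\hell}(x) \wedge h(x, y)}$ for $b = 0, 1$. Note that conditioning additionally on $h(x,y)$ (beyond $\eta_{G_b}(x)$ and $\psi_{\hell}(x)$) does not change the distribution of $y$, since $y$ ranges over $\Label$ and $h(x,y)$ merely names that value; more carefully, one checks that restricting by $h(x,y)$ simply retains, at each state, the value $\sigma_s(y)$ that is already determined by $\sigma_s$, so $w_{b,\hell}$ is well-defined whenever the support is nonempty and assigns to $y$ precisely the conditional law in~\eqref{eq:sufficiency}. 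From \eqref{eq:sufficiency} and the definition of the stochastic assignment $\sigma_{w}$ in Sect.~\ref{sub:Kripke}, I would then read off that for each $\ell\in\Label$,
\begin{align*}
\mu_{G_b,\hell}[\ell\,] = \Pr[\, \sigma_{s}(y) = \ell \,\mid\, s \randassign w_{b,\hell} \,],
\end{align*}
and hence $\mu_{G_b,\hell} = \sigma_{w_{b,\hell}}(y)$.

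Next I would translate Definition~\ref{def:sufficiency}: the classifier $C$ satisfies sufficiency between $G_0$ and $G_1$ iff $\sigma_{w_{0,\hell}}(y) = \sigma_{w_{1,\hell}}(y)$ for all $\hell\in\Label$. Then for each $\hell$, by Definition~\ref{def:modal:conditioning} we have $(\wtest, w_{0,\hell})\in\calr_{\eta_{G_0}(x) \wedge \gamma'(x,y,\hell)}$ and $(\wtest, w_{1,\hell})\in\calr_{\eta_{G_1}(x) \wedge \gamma'(x,y,\hell)}$, so Proposition~\ref{prop:conditional-identity} (the $\varepsilon = 0$ instance of Lemma~\ref{lem:IND-divergence}, applied to the variable $y$) gives that $\sigma_{w_{0,\hell}}(y) = \sigma_{w_{1,\hell}}(y)$ is equivalent to $\wtest \models (\eta_{G_0}(x) \wedge \gamma'(x,y,\hell)) \IND{y}{0,\Dtv} (\eta_{G_1}(x) \wedge \gamma'(x,y,\hell))$. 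Conjoining over $\hell\in\Label$ yields $\wtest \models \Suffice_{0}(x,y)$, as required; the same argument with $\varepsilon$ in place of $0$ gives the bias-$\varepsilon$ relaxation.

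I do not expect any serious obstacle — the proof is a routine transcription of the separation argument. The one point demanding a moment of care is the claim $\mu_{G_b,\hell} = \sigma_{w_{b,\hell}}(y)$: one must be sure that conditioning on $h(x,y)$ in the restriction does not alter the law of $y$ (it cannot, since $\sigma_s$ already pins down $\sigma_s(y)$), and that $w_{b,\hell}$ is defined, i.e. the event $\{\eta_{G_b}(x) \wedge \psi_{\hell}(x)\}$ has positive probability under $\wtest$ — which is exactly the implicit nonemptiness assumption already present in \eqref{eq:sufficiency}. Beyond that, everything follows from Lemma~\ref{lem:IND-divergence} and the bookkeeping of restrictions. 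So the write-up can simply say: ``The proof is analogous to that of Proposition~\ref{prop:separation}, with the roles of $\hy$ and $y$ (and of $\psi$ and $h$) exchanged,'' and then exhibit the identity $\mu_{G_b,\hell} = \sigma_{w_{b,\hell}}(y)$ and invoke Proposition~\ref{prop:conditional-identity}.
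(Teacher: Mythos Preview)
Your proposal is correct and follows essentially the same route as the paper: define $w_{b,\hell} = \wtest|_{\eta_{G_b}(x) \wedge \psi_{\hell}(x) \wedge h(x, y)}$, read off $\mu_{G_b,\hell} = \sigma_{w_{b,\hell}}(y)$ from~\eqref{eq:sufficiency}, rewrite sufficiency as $\sigma_{w_{0,\hell}}(y) = \sigma_{w_{1,\hell}}(y)$ for all $\hell\in\Label$, and conclude via Proposition~\ref{prop:conditional-identity}. Your extra remarks on well-definedness and the harmlessness of the $h(x,y)$ conjunct are more careful than the paper's own write-up, but the underlying argument is identical.
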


\begin{proof}
Let $\hell\in\Label$ and $w_{b,\hell} = \wtest|_{\eta_{G_b}(x) \wedge \psi_{\hell}(x) \wedge h(x, y)}$.
It follows from \eqref{eq:sufficiency} that:
\begin{align*}
\mu_{G_b,\hell}[\ell\,] = \Pr[\, \sigma_{s}(y) = \ell \,\mid\, s \randassign w_{b,\hell} \,],
\end{align*}
hence $\mu_{G_b,\hell} = \sigma_{w_{b,\hell}}(y)$.
Thus, by Definition~\ref{def:sufficiency}, the sufficiency between $G_0$ and $G_1$ is given by $\sigma_{w_{0,\hell}}(y) = \sigma_{w_{1,\hell}}(y)$ for all $\hell\in\Label$.
Therefore, this proposition
follows from Proposition~\ref{prop:conditional-identity}.
\qed
\end{proof}

It should be noted that for $\varepsilon > 0$,\, $\Suffice_{\varepsilon}(x, y)$ represents a relaxation of sufficiency up to bias $\varepsilon$ in terms of total variation $\Dtv$.

\begin{example}[Sufficiency in pedestrian detection]\label{eg:human:sufficient}
\modified{We illustrate sufficiency using the pedestrian detection in Example~\ref{eg:human:Ind} where a classifier $C$ detects whether a pedestrian is crossing the road in an image $x$.
As mentioned in Example~\ref{eg:human:speparate}, the level of the inherent technical difficulty of detecting a male pedestrian may be different from that of a female pedestrian.
Whereas separation guarantees the equality of recall between men and women, sufficiency guarantees that of precision.}

\modified{
The sufficiency $\Suffice_{0}(x, y)$ between men and women implies that the conditional probability that there is no pedestrian crossing the road when $C$ detects it, is the same between men and women.
From the viewpoint of the car driver, when $C$ raises a false alarm and stops the car suddenly, we have no bias about which of men and women are more likely to trigger false alarms and to be blamed for that.}
\end{example}

\section{Related Work}
\label{sec:related}
In this section, we provide a brief overview of related work on the specification of statistical machine learning and on epistemic logic for describing specification.

\paragraph{Desirable properties of statistical machine learning.}

There have been a large number of papers on attacks and defences for deep neural networks~\cite{Szegedy:14:ICLR,Chakraborty:18:arxiv}.
Compared to them, however, not much work has been done to explore the formal specification of various properties of machine learning.
Seshia et al.~\cite{Seshia:18:ATVA} present a list of desirable properties of DNNs (deep neural networks) although most of the properties are presented informally without mathematical formulas.
As for robustness, Dreossi et al.~\cite{Dreossi:19:VNN}
propose a unifying formalization of adversarial input generation in a rigorous and organized manner, although they formalize and classify attacks (as optimization problems) rather than define the robustness notions themselves.

Concerning the fairness notions, 
Barocas et al.~\cite{Barocas:19:book} survey various fairness notions and classify them into the three categories: independence, separation, and sufficiency.
Gajane~\cite{Gajane:17:arxiv} surveys the formalization of fairness notions for machine learning and present some justification based on social science literature.

\paragraph{Epistemic logic for describing specification.}
Epistemic logic~\cite{vonWright:51:book} has been studied to represent and reason about knowledge and belief~\cite{Fagin:95:book,Halpern:03:book}, and has been applied to describe various properties of distributed systems.

The \emph{BAN logic}~\cite{Burrows:90:TOCS}, proposed by Burrows, Abadi and Needham, is a notable example of epistemic logic used to model and verify the authentication in cryptographic protocols.
To improve the formalization of protocols' behaviors, some epistemic approaches integrate process calculi~\cite{Hughes:04:JCS,Chadha:09:Forte}.

Epistemic logic has also been used to formalize and reason about privacy properties, including anonymity~\cite{Syverson:99:FM,Garcia:05:FMSE,Kawamoto:07:JSIAM},
receipt-freeness of electronic voting protocols~\cite{Jonker:06:WOTE},
and privacy policy for social network services~\cite{Pardo:14:SEFM}.
Temporal epistemic logic is used to express information flow security policies~\cite{Balliu:11:PLAS}.

Concerning the formalization of fairness notions, previous work in formal methods has modeled different kinds of fairness involving timing by using temporal logic rather than epistemic logic.
As far as we know, no previous work has formalized fairness notions of machine learning by using modal logic.

\paragraph{Formalization of statistical properties.}

In studies of philosophical logic, Lewis~\cite{Lewis:80:subjectivist} shows the idea that when a random value has various possible probability distributions, then those distributions should be represented on distinct possible worlds.
Bana~\cite{Bana:17:EPSP} puts Lewis's idea in a mathematically rigorous setting. 
Recently, a modal logic called statistical epistemic logic (\StatEL{})~\cite{Kawamoto:19:FC} 
\modified{has been} proposed and used to formalize statistical hypothesis testing and the notion of differential privacy~\cite{Dwork:06:ICALP}.

\modified{To describe statistical properties of machine learning models, this work uses \StatEL{} to formalize the probabilistically chosen input to a learning model and the non-deterministically chosen dataset.
However, we could possibly employ other logics (e.g., fuzzy logic~\cite{Zadeh:1965:IC} or Markov logic network~\cite{Richardson:06:ML}) by extending them to deal with statistical sampling and non-deterministic inputs.
Exploring the possibility of different formalization using other logics is left for future work.
}

\section{Conclusion}
\label{sec:conclude}
In this paper we proposed an epistemic approach to the modeling of supervised learning and its desirable properties.
Specifically, we employed a distributional Kripke model in which each possible world corresponds to a possible dataset and modal operators are interpreted as transformation and testing on datasets.
Then we formalized various notions of the classification performance, robustness, and fairness of statistical classifiers by using our extension of statistical epistemic logic (\StatEL{}).
In this formalization, we clarified relationships among properties of classifiers, and relevance between classification performance and robustness.

We emphasize that this is the first attempt to use epistemic models and logical formulas to describe statistical properties of machine learning, and would be a starting point to develop theories of formal specification of machine learning.

In future work, we are planning to extend our framework 
to formally reason about system-level properties of learning-based systems.
We are also interested in developing a more general framework for the formal specification of machine learning associated with testing methods,
\modified{as well as in implementing a prototype tool}.
Our future work will also include an extension of \StatEL{} to formalize unsupervised learning and reinforcement learning.

\begin{acknowledgements}
I would like to thank the reviewers for their helpful and insightful comments. 
I am also grateful to Gergei Bana for his useful comments on part of a preliminary manuscript.
\end{acknowledgements}

\bibliographystyle{spmpsci}      
\bibliography{short,short-ML,new}   

\end{document}